\definecolor{darkblue}{rgb}{0.0, 0.0, 0.55}
\newtheorem{theorem}{Theorem}
\newtheorem*{theorem*}{Theorem}
\newtheorem{lemma}{Lemma}
\newtheorem{ass}{Assumption}
\newtheorem{definition}{Definition}
\newtheorem{example}{Example}
\newtheorem*{example*}{Example}
\newtheorem{proposition}{Proposition}
\newtheorem{remark}{Remark}
\newcommand {\be}{\begin{equation}}
\newcommand {\ee}{\end{equation}}
\title{Privacy-Constrained Signals}
\author{Zhang Xu\thanks{School of Economics, Renmin University of China. \textit{\ Email}: \href{mailto:xuzhang@ruc.edu.cn}{xuzhang@ruc.edu.cn}} \and Wei Zhao\thanks{School of Economics and Management, Tsinghua University. \textit{\ Email}: \href{mailto:wei.zhao@outlook.fr}{wei.zhao@outlook.fr}}}
\begin{document}

\maketitle

\begin{abstract}
    This paper provides a unified approach to characterize the set of all feasible signals subject to privacy constraints. The Blackwell frontier of feasible signals can be decomposed into minimum informative signals achieving the Blackwell frontier of privacy variables, and conditionally privacy-preserving signals. A complete characterization of the minimum informative signals is then provided. We apply the framework to ex-post privacy (including differential and inferential privacy) and to constraints on posterior means of arbitrary statistics.
    
\end{abstract}

\newpage
\tableofcontents
\newpage

\section{Introduction}
    The big data plays a critical role in economic decisions, promoting efficiency in allocation. At the same time, growing concern on privacy has been drawn.  The abuse of sensitive (personal) data, leading to statistical and price discrimination, imposes negative externality on the economy as a whole. A natural question is, what is the set of feasible datasets, subject to privacy constraints on sensitive information? The past literature on this question mainly focus on perfectly privacy-preserving constraints. However, privacy-preserving constraints may be too demanding in practical operations. In this spirit, various orders, both complete and partial, have been proposed to measure the degree of information leakage.

    This paper develops a general framework for privacy constraints. We model information disclosure through signals defined on an abstract state of the world $\tilde{\omega} \in \Omega$, and represent the sensitive information as a random variable $\tilde{\theta} : \Omega \to \Theta$ defined on the same state space. For instance, if $\Omega$ is an $n$-dimensional space, the sensitive component $\Theta$ may correspond to its first $m$ dimensions with $m < n$. Following \cite{blackwell1953equivalent}, each signal (experiment) can be represented by the distribution of posteriors it induces. We therefore model privacy constraints as a subset of distributions over posteriors about the sensitive variable $\tilde{\theta}$, which we refer to as the \emph{privacy-permissible set}. A signal is privacy-constrained if the posterior distribution it induces, when marginalized over $\Theta$, belongs to this permissible set. For the analysis to be well behaved and  natural, we assume that the privacy-permissible set is a lower set with respect to the Blackwell order, that is, whenever a signal is permissible, any less informative (in Blackwell sense) signal is also permissible.

    To characterize the set of privacy-constrained signals, it is equivalent to describe its Blackwell frontier, that is, the set of all privacy-constrained signals that are Blackwell-undominated.
    Theorem~\ref{thm:decomposition} reduces this task to characterizing the Blackwell frontier of the privacy-permissible set itself.
    Given a distribution over posteriors about the sensitive variable, $\gamma \in \Delta(\Delta(\Theta))$, we first construct a minimum-informative extension $\tau_\gamma$, which preserves the marginal distribution over $\Delta(\Theta)$ while revealing as little information as possible about the state $\tilde{\omega}$.
    Afterward, we disclose $\tau_\gamma$, and then disclose a Blackwell-undominated signal among those that are conditionally privacy-preserving given $\tau_\gamma$. The latter class of signals is characterized in \cite{strack2024privacy}.
    This sequential procedure yields an element of the Blackwell frontier of privacy-constrained signals.

    Theorem~\ref{prp:least-informative-signal} describes how to generate all minimum-informative extensions for a given $\gamma \in \Delta(\Delta(\Theta))$. Under a minimum-informative extension, each posterior over $\tilde{\theta}$ in the support of $\gamma$ is extended to exactly one posterior over the full state $\tilde{\omega}$. Thus, computing a minimum-informative extension amounts to assigning a conditional distribution of $\tilde{\omega}$ given $\tilde{\theta}$ to every element in the support of $\gamma$. In the discrete setting, this assignment can be expressed through a finite collection of linear constraints.

    However, characterizing the Blackwell frontier of an abstract privacy-permissible set remains challenging. In Section~\ref{sec:privacy_permissible_set}, we focus on two important classes of privacy constraints and show how to characterize their Blackwell frontiers. The first class is \emph{ex-post privacy}, which encompasses differential privacy \cite{dwork2006calibrating}, inferential privacy \cite{ghosh2016inferential, wang2025inferentially}, and Bayesian privacy \cite{eilat2021bayesian}. When the regulator is concerned with ex post privacy loss, the constraint is imposed directly on the posterior beliefs about the sensitive variable $\tilde{\theta}$. Ex-post privacy specifies a subset of posteriors over $\tilde{\theta}$, and a signal is ex post privacy-constrained if every realized posterior about $\tilde{\theta}$ lies in this subset. Proposition~\ref{prp:extreme_points} shows that under ex-post privacy, characterizing the Blackwell frontier of the privacy-permissible set reduces to identifying the extreme points of the permissible posteriors over $\tilde{\theta}$. Proposition~\ref{prp:ip-froniter} further provides an explicit characterization of the Blackwell frontier under inferential privacy. Related results for the discrete cases of inferential privacy and differential privacy appear in \cite{xu2025privacy}. The second class is \emph{posterior-mean privacy}, where the regulator is concerned only with the information revealed about the posterior mean of a statistic defined on the sensitive variable. In this case, Proposition~\ref{prp:expectation} provides a clean characterization of the corresponding Blackwell frontier.

    The remainder of the paper is organized as follows. Section~\ref{sec:model} introduces the formal setting. Section~\ref{sec:pcs} presents the general characterization of privacy-constrained signals. Section~\ref{sec:privacy_permissible_set} characterizes the Blackwell frontier of the privacy-permissible set for several privacy constraints. Section~\ref{sec:discussion} offers further discussion. All proofs are collected in the Appendix.

\section{Model}\label{sec:model}

Let $(\Omega, \mathcal{B}(\Omega), \mu_0)$ be a probability space, where $\mathcal{B}(\cdot)$ is the Borel $\sigma$-algebra generator, $(\Omega, \mathcal{B}(\Omega))$ is a standard Borel space and $\mu_0 \in \Delta(\Omega)$ is an interior prior.\footnote{A Borel space $(E,\mathcal{B}(E))$ is standard if there is an isomorphism $\psi: E \to F$ for some  $F \in \mathcal{B}(\mathbb{R})$. An isomorphism is a bijection $\psi$ such that both $\psi$ and $\psi^{-1}$ are measurable. (\cite{ccinlar2011probability}, p.11.)}
The \emph{state} is a random variable $\tilde{\omega} \sim \mu_0$ and $\omega \in \Omega$ is a realization of state. The \emph{privacy} is formalized as a random variable $\tilde{\theta}: \Omega \to \Theta$, where $(\Theta, \mathcal{B}(\Theta))$ is a standard Borel space. $\theta \in \Theta$ is a realization of privacy. For example, $\omega$ is the vector containing consumer characteristics such as gender, race, willingness to pay, and history records, and $\theta$ is the sensitive subvector consisting of gender and race.

A \emph{signal} is a random variable $\pi: \Omega \times [0,1] \to S$, where $(S,\mathcal{B}(S))$ is a standard Borel space. An element $s \in S$ is a \textit{signal realization}. Let $\tilde{r}$ be an auxiliary random variable uniformly distributed on $[0,1]$.
For each realization $(\omega, r)$, the signal realization is given by $\pi(\omega, r)$. Let $\lambda$ be the Lebesgue measure and $\mathbb{P}:=\mu_0 \times \lambda$ be the product measure induced by $\mu_0$ over $\Omega$ and $\lambda$ over $[0,1]$. Then the distribution of $\pi$ is $p^{\pi}(\cdot) := \mathbb{P}(\{(\omega, r):\pi(\omega,r)\in \cdot\})$ (i.e., $p^{\pi}(B) = \mathbb{P}(\{(\omega,r):\pi(\omega,r)\in B\})$, $\forall B \in \mathcal{B}(S)$, and similarly hereinafter).\footnote{$p^{\pi}$ is a probability measure on $(S, \mathcal{B}(S))$; it is called \textit{distribution} of $\pi$. If $S \subseteq \mathbb{R}$, then $F^{\pi}:\mathbb{R} \to [0,1]$, s.t. $F^{\pi}(s) = p^{\pi}(\pi \leq s)$ for all $s \in \mathbb{R}$ is called the \textit{distribution function} or CDF of $\pi$.} 
The conditional probability of $\pi$ given $\omega$ is
$p_{\omega}^{\pi}(\cdot):=\mathbb{P}(\pi\in \cdot|\omega)$.\footnote{Since $(\Omega \times [0,1], \mathcal{B}(\Omega) \otimes \mathcal{B}([0,1]))$ and $(S, \mathcal{B}(S))$ are standard Borel spaces, there exist regular versions of $\mathbb{P}(\pi\in\cdot|\omega)$ and $\mathbb{P}(\tilde{\omega}\in\cdot|s)$ respectively (\cite{ccinlar2011probability}, Theorem 2.19, p.154).}

Observing signal realization $s$ induces the posterior $\mu_s(\cdot):=\mathbb{P}(\tilde{\omega} \in \cdot|s)$.
Given signal $\pi$, let $\tilde{\mu}_{\pi}$ denote the associated belief-valued random variable of posterior belief $\mu_{s}$. Moreover, let $\left<\pi\right>(\cdot) := p^\pi(\{s \in S: \mu_s \in \cdot\})$ denote the distribution of $\tilde{\mu}_\pi$. We write $(\cdot)^{\theta}$ for the operator mapping a distribution over $\Omega$ or $\Delta(\Omega)$ to its marginal over $\Theta$ or $\Delta(\Theta)$. For $\mu\in\Delta(\Omega)$,
$\mu^{\theta}(\cdot) = \mu\bigl(\{\omega : \tilde{\theta}(\omega)\in\cdot\}\bigr)$,
and for $\tau\in\Delta(\Delta(\Omega))$, $\tau^{\theta}(\cdot) = \tau\bigl(\{\mu : \mu^{\theta}\in\cdot\}\bigr)$. Therefore, $\mu_s^{\theta}$ represents the posterior about privacy induced by signal realization $s$, and $\langle \pi\rangle^{\theta}$ represents the distribution of posteriors about privacy.

To keep the notation clear, we use $\mu\in\Delta(\Omega)$ to denote an arbitrary posterior over the state $\tilde{\omega}$, and $\nu\in\Delta(\Theta)$ to denote an arbitrary posterior over the privacy variable $\tilde{\theta}$. Likewise, we use $\tau\in\Delta(\Delta(\Omega))$ for an arbitrary distribution over posteriors about $\tilde{\omega}$, and $\gamma\in\Delta(\Delta(\Theta))$ for an arbitrary distribution over posteriors about $\tilde{\theta}$.

Denote $\Pi$ by the set of all signals. For two signals $\pi, \pi' \in \Pi$, we say that $\pi$ \emph{Blackwell dominates} $\pi'$ and write $\pi \succeq \pi'$ provided $\langle \pi \rangle$ is a \emph{mean-preserving spread} of $\langle \pi' \rangle$ which is also denoted as $\langle \pi \rangle \succeq \langle \pi' \rangle$ \citep{blackwell1951comparison,blackwell1953equivalent,strassen1965existence}.\footnote{\label{footnote:dilation} For $\tau,  \tau' \in \Delta(\Delta(\Omega))$, $\tau$ is a mean-preserving spread of $\tau'$ if there is a \emph{dilation} $K: \Delta (\Omega) \to \Delta(\Delta (\Omega))$ 
[i.e., a Markov kernel such that $\forall \mu' \in \operatorname{supp}(\tau')$,  $\mu' = \int_{\Delta (\Omega)} \mu K(d\mu|\mu')$ (mean-preservation)],
such that $\tau (\cdot) = \int_{\Delta (\Omega)} K(\cdot|\mu') \tau'(d\mu')$ (spread). The similar definition applies to any $\gamma, \gamma' \in \Delta(\Delta(\Theta))$.}
Given $\pi \succeq \pi'$, if $\pi \preceq \pi'$ also holds, then $\pi$ and $\pi'$ are \emph{Blackwell equivalent}, written $\pi \sim \pi'$; otherwise, $\pi$ \emph{strictly} Blackwell dominates $\pi'$, written $\pi \succ \pi'$. We say that $\pi$ Blackwell dominates $\pi'$ \emph{in terms of $\tilde{\theta}$} and write $\pi \succeq_{\theta} \pi'$ if $\langle \pi \rangle^{\theta}$ is a mean-preserving spread of $\langle \pi' \rangle^{\theta}$. The corresponding relations $\sim_{\theta}$ and $\succ_{\theta}$ are defined analogously. Whenever we refer to Blackwell dominance without the qualifier ``in terms of $\tilde{\theta}$,'' we mean dominance with respect to $\tilde{\omega}$.

\subsection{Privacy-Constrained Signals}

From an informational perspective, each signal can be identified with the posterior distribution it induces. Thus, analyzing signals is equivalent to working directly with Bayesian-plausible distributions over posteriors (cf. \cite{blackwell1953equivalent,kamenica2011bayesian}).

Let
$\Gamma := \{\gamma \in \Delta(\Delta(\Theta)) : \mathbb{E}_{\gamma}[\nu] = \mu_0^{\theta}\}$
denote the Bayesian-plausible distributions of posteriors about privacy.
If there is no constraint on how much information about privacy may be disclosed, then any posterior distribution in $\Gamma$ can be induced. Hence, in general, a privacy constraint can be imposed directly on the set $\Gamma$. 

In particular, let $$\mathcal{P} \subseteq \Gamma$$ be a nonempty subset containing the posterior
distributions about $\tilde{\theta}$ that are allowed to be disclosed. We refer to $\mathcal{P}$ as the \emph{privacy-permissible set}. To ensure that the privacy
constraint is well behaved, we impose the following assumptions on $\mathcal{P}$.

\begin{ass}\label{ass:blackwell_closeness}
$\mathcal{P}$ is a \emph{lower set} with respect to the Blackwell order; that is, if
$\gamma \in \mathcal{P}$ and $\gamma' \preceq \gamma$, then $\gamma' \in \mathcal{P}$.
\end{ass}

Assumption \ref{ass:blackwell_closeness} is natural and is required by any reasonable notion of a privacy constraint. 
It states that if a certain amount of information about privacy may be disclosed, then revealing any less informative disclosure must also be permissible.

\begin{ass}\label{ass:closed_set}
$\mathcal{P}$ is a \emph{closed set}; that is, if $\{\gamma_t\}_{t\in\mathbb{N}_+}$ satisfies
$\gamma_t \in \mathcal{P}$ for all $t \in \mathbb{N}_+$ and $\gamma^* := \lim_{t\to\infty}\gamma_t$ exists, then
$\gamma^* \in \mathcal{P}$.
\end{ass}

Assumption \ref{ass:closed_set} is a technical requirement that simplifies our characterization of privacy-constrained signals. When a decision-maker selects a signal in $\mathcal{P}$ to maximize an objective function, taking the supremum is equivalent to optimizing over the closure of $\mathcal{P}$. Let $$\overline{\mathcal{P}} := \{\gamma \in \mathcal{P}: \nexists \gamma' \in \mathcal{P} \text{ such that } \gamma' \succ \gamma\}$$ be the \emph{Blackwell frontier} of $\mathcal{P}$. Under Assumption~\ref{ass:blackwell_closeness} and \ref{ass:closed_set}, $\overline{\mathcal{P}} \neq \emptyset$ and $\mathcal{P} = \{\gamma \in \Gamma: \exists \overline{\gamma} \in \overline{\mathcal{P}} \text{ such that } \gamma \preceq \overline{\gamma}\}$.\footnote{Let $\{\gamma_t\}_{t \in \mathbb{N}_+}$ be a sequence in $\mathcal{P}$ such that $\gamma_t \prec \gamma_{t+1}$ for all $t$.
For each $t$, let $\tilde{\nu}_t$ be a belief-valued random variable satisfying $\tilde{\nu}_t \sim \gamma_t$.
Then $\{\tilde{\nu}_t\}_{t \in \mathbb{N}_+}$ is a martingale (rigorously, it requires $\gamma _{t+1}$ is sufficient for $\gamma _t$, see footnote~\ref{footnote:sufficiency}). By the martingale convergence theorem \citep{doob1951continuous}, the limit $\nu^* = \lim_{t\to\infty} \nu_t$ exists. Let $\gamma^*$ be the distribution of $\nu^*$. Then, $\lim_{t\to \infty} \gamma_t = \gamma^*$. By Assumption~\ref{ass:closed_set}, $\gamma^* \in \mathcal{P}$. Then, $\gamma^* \in \overline{\mathcal{P}}$. } $\overline{\mathcal{P}}$ gives several upper bounds of the amount of information can be disclosed about privacy. Moreover, when  $\mathcal{P}$ is not closed, our characterization of privacy-constrained signals can be made correct by removing some signals whose induced distribution over posterior about privacy is in the Blackwell frontier $\overline{\mathcal{P}}$.

\begin{definition}
    A signal $\pi$ is a $\mathcal{P}$-privacy-constrained signal if $\langle \pi \rangle^\theta \in \mathcal{P}$.
\end{definition}

\begin{example}[Privacy-Preserving Signals]
    When $\mathcal{P} = \{\delta_{\mu_0^{\theta}}\}$,\footnotetext{$\delta_{a}$ is the Dirac delta function, i.e., $\delta_a(x)=0$ if $x\neq a$ and $\int \delta_a(x) dx =1$.} $\mathcal{P}$-privacy-constrained signals reduce to the \textit{privacy-preserving signals} introduced by \citet{he2021private} and \citet{strack2024privacy}. \citeauthor{strack2024privacy} show that all privacy-preserving signals can be generated by garbling and reordering of a conditionally revealing quantile signal.
\end{example}

\begin{example}[Single-Bound Privacy]\label{eg:single_bound}
    A natural generalization of privacy-preserving signals is to consider $\mathcal{P} = \{\gamma \in \Gamma : \gamma \preceq \overline{\gamma}\}$ for some $\overline{\gamma} \in \Gamma$. The element $\overline{\gamma}$ provides a single upper bound on the amount of information that may be disclosed about privacy. A signal $\pi$ is said to be $\overline{\gamma}$-privacy-constrained if $\langle \pi\rangle^{\theta} \preceq \overline{\gamma}$.

\end{example}

This framework also includes differential privacy \citep{dwork2006calibrating}, inferential privacy \citep{ghosh2016inferential, wang2025inferentially}, and privacy constraints defined through the posterior mean of a statistic. These concepts are discussed in greater detail in Section~\ref{sec:privacy_permissible_set}.

\section{Characterization of Privacy-Constrained Signals}\label{sec:pcs}

Compare with privacy-preserving constraint, our $\mathcal{P}$-privacy constraint allows us to disclose some information about privacy. Based on this, we can intuitively construct the following two-stage disclosure rule:

\emph{Stage 1.} Construct a signal $\pi_{1}: \Omega \times [0,1] \to S_{1}$ to release permissible information about privacy $\tilde{\theta}$.

\emph{Stage 2.} Construct another signal $\pi_{2}: \Omega \times [0,1] \to S_{2}$, which is conditionally privacy-preserving given $\pi_1$, i.e., for almost every $B \in \mathcal{B}(\Theta)$ and $s_1 \in S_1, s_2 \in S_{2}$,
$$\mathbb{P}(\tilde{\theta} \in B | s_2, s_{1}) := \mu^{\theta}_{(s_{1}, s_{2})}(B)
= \mu^{\theta}_{s_{1}}(B).$$

Let $\pi_1 \lor \pi_2 : \Omega \times [0,1] \to S_1 \times S_2$ denote the \emph{join} of two signals $\pi_1$ and $\pi_2$. When the state–randomizer pair $(\omega,r)$ is realized, the joint signal is
$\pi_1 \lor \pi_2 (\omega,r) = (\pi_1(\omega,r), \pi_2(\omega,r))$. Observing $\pi_1 \lor \pi_2$ is equivalent to observing $\pi_1$ and then $\pi_2$ sequentially, with beliefs updated at each stage according to Bayes' rule. 
It is immediate that if $\pi_1$ is $\mathcal{P}$-privacy-preserving and $\pi_2$ is conditionally privacy-preserving given $\pi_1$, then the joint signal $\pi_1 \lor \pi_2$ is $\mathcal{P}$-privacy-preserving as well. Conditional privacy preservation corresponds to designing $\pi_2$ separately for each realization of $\pi_1$; see Remark 2 in \cite{strack2024privacy}. In other words, constructing $\pi_2$ amounts to constructing a privacy-preserving signal on each posterior belief induced by $\pi_1$.

Denote by $\Pi_{\mathcal{P}}$ the set of $\mathcal{P}$-privacy-constrained signals, and define its Blackwell frontier as $\overline{\Pi}_{\mathcal{P}} :=
\left\{
\pi \in \Pi_{\mathcal{P}}
: \nexists \pi' \in \Pi_{\mathcal{P}}
\text{ such that }
\pi' \succ \pi
\right\}$.
Any $\pi \in \overline{\Pi}_{\mathcal{P}}$ is called a \emph{Blackwell-undominated} $\mathcal{P}$-privacy-constrained signal. In what follows, we present our main result, Theorem~\ref{thm:decomposition}, which characterizes the Blackwell frontier $\overline{\Pi}_{\mathcal{P}}$ via two-stage disclosure rules. Any $\mathcal{P}$-privacy-constrained signal can then be obtained by garbling elements of this frontier.

\begin{lemma}\label{lemma:blackwell-frontier-P}
     A signal is $\mathcal{P}$-privacy-constrained if and only if it is Blackwell dominated by a signal $\pi'$ such that $\langle \pi' \rangle^\theta \in \overline{\mathcal{P}}$.
\end{lemma}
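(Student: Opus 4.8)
The plan is to prove the two implications separately. The ``if'' direction is short and reduces to the fact that informativeness about $\tilde\omega$ dominates informativeness about $\tilde\theta$; the ``only if'' direction is the substantive one and requires building, out of a given $\mathcal{P}$-privacy-constrained signal, a Blackwell-larger signal whose $\theta$-marginal lands exactly on $\overline{\mathcal{P}}$.

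For the ``if'' direction, suppose $\pi \preceq \pi'$ with $\langle\pi'\rangle^\theta\in\overline{\mathcal{P}}$. First I would argue $\pi\preceq_\theta\pi'$. By footnote~\ref{footnote:dilation} there is a dilation $K$ with $\langle\pi'\rangle=\int K(\cdot\mid\mu)\,\langle\pi\rangle(d\mu)$ and barycenter $\mu$ on the support of $\langle\pi\rangle$. Couple $\tilde\mu\sim\langle\pi\rangle$ with $\tilde\mu'\mid\tilde\mu\sim K(\cdot\mid\tilde\mu)$, so $\mathbb{E}[\tilde\mu'\mid\tilde\mu]=\tilde\mu$ and $\tilde\mu'\sim\langle\pi'\rangle$. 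Since $\mu\mapsto\mu^\theta$ is affine (a pushforward under a measurable map) it commutes with conditional expectation, so $\mathbb{E}[(\tilde\mu')^\theta\mid\tilde\mu]=(\mathbb{E}[\tilde\mu'\mid\tilde\mu])^\theta=\tilde\mu^\theta$; conditioning further on the coarser variable $\tilde\mu^\theta$ and using the tower property gives $\mathbb{E}[(\tilde\mu')^\theta\mid\tilde\mu^\theta]=\tilde\mu^\theta$, i.e.\ $\langle\pi'\rangle^\theta$ is a mean-preserving spread of $\langle\pi\rangle^\theta$. Hence $\langle\pi\rangle^\theta\preceq\langle\pi'\rangle^\theta\in\overline{\mathcal{P}}\subseteq\mathcal{P}$, and Assumption~\ref{ass:blackwell_closeness} yields $\langle\pi\rangle^\theta\in\mathcal{P}$, so $\pi$ is $\mathcal{P}$-privacy-constrained.

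For the ``only if'' direction, suppose $\langle\pi\rangle^\theta\in\mathcal{P}$. By the characterization of $\mathcal{P}$ recorded after Assumption~\ref{ass:closed_set} (which rests on Assumptions~\ref{ass:blackwell_closeness} and~\ref{ass:closed_set}), pick $\overline\gamma\in\overline{\mathcal{P}}$ with $\langle\pi\rangle^\theta\preceq\overline\gamma$. It then suffices to establish the claim: for any signal $\pi$ and any $\gamma^*\succeq\langle\pi\rangle^\theta$ there is a signal $\pi'\succeq\pi$ with $\langle\pi'\rangle^\theta=\gamma^*$; taking $\gamma^*=\overline\gamma$ finishes the proof. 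To prove the claim, let $K:\Delta(\Theta)\to\Delta(\Delta(\Theta))$ be a dilation witnessing $\gamma^*\succeq\langle\pi\rangle^\theta$. For each realization $s$ of $\pi$ write $\nu_s:=\mu_s^\theta$; since $K(\cdot\mid\nu_s)$ has barycenter $\nu_s$, the splitting lemma gives a signal $\sigma_s$ about $\tilde\theta$ alone whose distribution of posteriors over $\Theta$, starting from the belief $\nu_s$, equals $K(\cdot\mid\nu_s)$. Let $\pi_2$ be the signal that, after $\pi$ outputs $s$, runs $\sigma_s$ on $\tilde\theta$ using the auxiliary randomizer (decomposing $\tilde r$ into independent uniforms as needed), and set $\pi':=\pi\lor\pi_2$. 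Then $\pi'\succeq\pi$ because discarding the $\pi_2$-coordinate is a garbling, and a Bayes-rule computation shows that conditional on $s$ the $\theta$-posterior induced by $\pi'$ is the Bayesian update of $\nu_s$ by $\sigma_s$; integrating over $s\sim p^\pi$ gives $\langle\pi'\rangle^\theta=\int K(\cdot\mid\nu)\,\langle\pi\rangle^\theta(d\nu)=\gamma^*$, and with $\gamma^*=\overline\gamma$ this lies in $\overline{\mathcal{P}}$.

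The main obstacle is measurability in the ``only if'' construction: $\pi_2$ must be a bona fide signal, so $(s,\theta)\mapsto\sigma_s(\cdot\mid\theta)$ has to be jointly measurable. I would obtain this from the standard Borel setup --- $s\mapsto\mu_s$ is a measurable regular conditional probability, hence $s\mapsto\nu_s$ is measurable, $\nu\mapsto K(\cdot\mid\nu)$ is a Markov kernel, and the Radon--Nikodym densities defining the canonical (``direct'') splitting $\sigma_s$ can be selected jointly measurably by a measurable disintegration argument --- invoking a measurable version of the splitting lemma of the kind already used in the constructions behind Theorems~\ref{thm:decomposition} and~\ref{prp:least-informative-signal} rather than rederiving it. A secondary bookkeeping point, that there is enough auxiliary randomness to run $\pi_2$ conditionally on the realization of $\pi$, is dispatched by splitting the single uniform $\tilde r$ into countably many independent uniforms.
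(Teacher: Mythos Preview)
Your proof is correct and follows essentially the same approach as the paper: the ``if'' direction is exactly the paper's Lemma~\ref{lemma:marginal_dominate} plus Assumption~\ref{ass:blackwell_closeness}, and for ``only if'' the paper likewise picks $\overline\gamma\in\overline{\mathcal{P}}$ dominating $\langle\pi\rangle^\theta$, builds a second signal that refines the $\theta$-belief along the dilation to $\overline\gamma$, and takes the join with $\pi$. The only cosmetic difference is packaging --- the paper phrases the construction via ``sufficiency in terms of $\tilde\theta$'' and a footnote extending a kernel $Q:\Delta(\Theta)\to\Delta(\Delta(\Theta))$ to $K:\Delta(\Omega)\to\Delta(\Delta(\Omega))$ by keeping the conditional $\mu(\tilde\omega\mid\tilde\theta)$ fixed, whereas you spell out the same object as an explicit splitting-lemma signal $\sigma_s$ about $\tilde\theta$ run after $\pi$.
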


Lemma \ref{lemma:blackwell-frontier-P} states that, without loss of generality, constructing the Blackwell frontier of $\mathcal{P}$-privacy-constrained signals reduces to considering only those signals whose induced distribution over posteriors about privacy lies on the Blackwell frontier of $\mathcal{P}$. Then, by garbling these  signals in the frontier, we obtain all $\mathcal{P}$-privacy-constrained signals. Therefore, in what follows, we focus on constructing the Blackwell frontier of $\mathcal{P}$-privacy-constrained signals from the Blackwell frontier of $\mathcal{P}$.

For each $\gamma \in \overline{\mathcal{P}}$, we construct its \emph{minimum-informative extensions}, namely every posterior about state $\tau_{\gamma} \in \mathcal{T}$ satisfying
\begin{equation*}
    \begin{aligned}
        \tau_\gamma^\theta = \gamma \text{ almost surely} \quad & \text{ (Extension)}, \\
        \nexists \tau \in \mathcal{T} \text{ such that } \tau^\theta = \gamma \text{ almost surely},  \text{ and } \tau \prec \tau_\gamma \quad & \text{  (Minimum informative)}.
    \end{aligned}
\end{equation*}
That is, $\tau_\gamma$ extends $\gamma$ from a distribution over posteriors about privacy $\tilde{\theta}$ to a distribution over posteriors about the full state $\tilde{\omega}$, while revealing the least additional information about the state needed to sustain the same distribution of posteriors about privacy. Since a posterior distribution can itself be viewed as a signal, the object $\tau_\gamma$ provides exactly the first–stage disclosure we seek.
To simplify notation, we use $\tau_\gamma$ to denote any signal $\pi$ such that $\langle \pi \rangle = \tau_\gamma$.\footnote{Formally, $\tau$ can induce a joint distribution over $\Omega\times \Delta(\Omega)$, $p^{(\tilde{\omega},\tilde{\mu})}$ such that for any $B_{\Omega} \times B_{\Delta(\Omega)} \in \mathcal{B}(\Omega)\otimes \mathcal{B}(\Delta(\Omega))$, $p^{(\tilde{\omega},\tilde{\mu})}(B_{\Omega} \times B_{\Delta(\Omega)}) = \int_{B_{\Delta(\Omega)}} \mu(B_{\Omega}) \tau(d\mu)$. Since $\Delta(\Omega)$ with the Borel $\sigma$-algebra induced by the weak-$*$ topology, is standard Borel, there exists a regular version of conditional distribution of $\tilde{\mu}$ given $\omega$, denoted by $p_{\omega}^{\tilde{\mu}}$ (\cite{ccinlar2011probability}, Theorem 2.18, p.154). Then there exists a measurable function $\pi_\tau:\Omega \times [0,1] \to \Delta(\Omega)$, such that $\pi_\tau(\omega,\tilde{r})$  has distribution $p_\omega^{\tilde{\mu}}$ (\cite{kallenberg1997foundations}, Lemma 2.22, p.34). Therefore, $\pi_\tau$ is the signal with $\langle\pi_\tau\rangle = \tau$.} For a $\gamma \in \overline{\mathcal{P}}$, the minimum-informative extension need not be unique. Let $\underline{\mathcal{T}}_\gamma$ denote the set of all minimum-informative extensions of $\gamma$.

\begin{theorem}[Characterization of Privacy-Constrained Signals]\label{thm:decomposition}
A signal $\pi$ belongs to the Blackwell frontier of $P$-privacy-constrained signals, $\overline{\Pi}_\mathcal{P}$, if and only if there exist $\tau_{\gamma} \in \underline{\mathcal{T}}_{\gamma}$ for some $\gamma \in \overline{\mathcal{P}}$ and  $\tilde{q}_{\tau_{\gamma}}$ that is Blackwell-undominated conditionally privacy-preserving given $\tau_\gamma$ such that $\pi$ is Blackwell equivalent to the joint signal $\tau_{\gamma} \lor \tilde{q}_{\tau_{\gamma}}$.
\end{theorem}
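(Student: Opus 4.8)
The plan is to prove the two directions of the ``if and only if'' separately, using Lemma~\ref{lemma:blackwell-frontier-P} to reduce attention to signals whose privacy-marginal lies in $\overline{\mathcal{P}}$, and using the two-stage decomposition logic already sketched in Section~\ref{sec:pcs}. For the ``if'' direction, suppose $\pi$ is Blackwell equivalent to $\tau_\gamma \lor \tilde q_{\tau_\gamma}$ with $\tau_\gamma \in \underline{\mathcal{T}}_\gamma$, $\gamma \in \overline{\mathcal{P}}$, and $\tilde q_{\tau_\gamma}$ Blackwell-undominated among conditionally privacy-preserving signals given $\tau_\gamma$. First I would check feasibility: since $\tilde q_{\tau_\gamma}$ is conditionally privacy-preserving given $\tau_\gamma$, the join does not move the posterior over $\tilde\theta$ beyond what $\tau_\gamma$ induced, so $\langle \tau_\gamma \lor \tilde q_{\tau_\gamma}\rangle^\theta = \tau_\gamma^\theta = \gamma \in \overline{\mathcal{P}} \subseteq \mathcal{P}$, hence $\pi \in \Pi_\mathcal{P}$. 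Then I would argue $\pi$ is Blackwell-undominated within $\Pi_\mathcal{P}$: suppose some $\pi' \in \Pi_\mathcal{P}$ has $\pi' \succ \pi$. By Lemma~\ref{lemma:blackwell-frontier-P} we may take $\langle\pi'\rangle^\theta =: \gamma' \in \overline{\mathcal{P}}$. Since $\pi' \succeq \pi$ implies $\pi' \succeq_\theta \pi$, we get $\gamma' \succeq \gamma$; but both lie in $\overline{\mathcal{P}}$, so $\gamma' \sim_\theta \gamma$. Now decompose $\pi'$ relative to its own privacy-marginal: the information $\pi'$ carries about $\tilde\omega$ factors into (i) the part pinned down by the posterior over $\tilde\theta$ and (ii) a conditionally privacy-preserving remainder. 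Minimality of $\tau_\gamma$ forces the first part of $\pi'$ to be Blackwell dominated by $\tau_\gamma$, while the second part is a conditionally privacy-preserving signal given (something Blackwell-equivalent to) $\tau_\gamma$; since $\tilde q_{\tau_\gamma}$ is undominated in that class, the remainder of $\pi'$ cannot strictly dominate it. Combining, $\pi' \not\succ \pi$, a contradiction.

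For the ``only if'' direction, take $\pi \in \overline{\Pi}_\mathcal{P}$ and set $\gamma := \langle\pi\rangle^\theta$. By Lemma~\ref{lemma:blackwell-frontier-P} and the fact that $\pi$ is already Blackwell-maximal in $\Pi_\mathcal{P}$, one shows $\gamma \in \overline{\mathcal{P}}$: if not, there is $\gamma' \in \mathcal{P}$ with $\gamma' \succ \gamma$, and one can construct a signal in $\Pi_\mathcal{P}$ strictly dominating $\pi$ by first refining the $\tilde\theta$-disclosure from $\gamma$ to $\gamma'$ (using a dilation) and then re-attaching the conditionally privacy-preserving part, contradicting maximality. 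Next, I would split $\langle\pi\rangle$ itself into a coarsest ``privacy-determining'' component $\tau$ and a conditionally privacy-preserving component: concretely, $\tau$ is obtained by pushing each posterior $\mu_s$ forward to the unique minimum-informative extension of $\mu_s^\theta$ consistent with the conditional law of $\tilde\omega$ given $\tilde\theta$ realized under $\pi$ — this is exactly the construction underlying Theorem~\ref{prp:least-informative-signal}, and it yields $\tau \in \underline{\mathcal{T}}_\gamma$. Then $\pi$ is Blackwell equivalent to $\tau \lor q$ where $q$ is conditionally privacy-preserving given $\tau$. Finally, $q$ must be Blackwell-undominated in that conditional class: otherwise replacing $q$ by a strictly more informative conditionally privacy-preserving $q'$ keeps the join in $\Pi_\mathcal{P}$ (privacy-marginal unchanged at $\gamma$) while strictly dominating $\pi$, contradicting $\pi \in \overline{\Pi}_\mathcal{P}$. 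Setting $\tilde q_{\tau_\gamma} := q$ completes this direction.

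The main obstacle I anticipate is making rigorous the claim that ``every signal whose privacy-marginal is $\gamma$ factors through $\tau_\gamma$'': that is, given an arbitrary $\pi'$ with $\langle\pi'\rangle^\theta = \gamma$, I need the $\tilde\theta$-measurable part of $\pi'$ to be Blackwell dominated by \emph{the particular} minimum-informative extension $\tau_\gamma$ I fixed, or at least by some element of $\underline{\mathcal{T}}_\gamma$, and I need the residual to be genuinely conditionally privacy-preserving given $\tau_\gamma$ (not merely given $\pi'$'s own first stage). This requires a careful disintegration argument: conditioning on the realized value of $\tilde\theta$'s posterior, the remaining uncertainty about $\tilde\omega$ decomposes, and one must verify that minimality of $\tau_\gamma$ among extensions of $\gamma$ — which says each atom of $\gamma$ is extended to a single posterior over $\Omega$ — is exactly what guarantees $\tau_\gamma$ is Blackwell-below the first stage of any competitor. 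The measure-theoretic bookkeeping (regular conditional distributions on the standard Borel spaces $\Delta(\Omega)$, $\Delta(\Theta)$, and the join) is routine given the footnotes already in the paper, but the conceptual crux is this factorization lemma, and I would isolate it as a separate claim before assembling the two directions above.
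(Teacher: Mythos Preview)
Your ``only if'' direction is sound and in fact more direct than the paper's argument. You build the first-stage signal $\tau$ in one shot by reporting only $\mu_s^\theta$ and assigning to each $\nu\in\operatorname{supp}(\gamma)$ the single posterior $\mathbb{E}_{\langle\pi\rangle}[\mu_s\mid\mu_s^\theta=\nu]$, which immediately satisfies both conditions of Theorem~\ref{prp:least-informative-signal} and hence lies in $\underline{\mathcal{T}}_\gamma$. The paper instead descends iteratively: if $\pi\notin\underline{\mathcal{T}}_{\langle\pi\rangle^\theta}$ it finds $\pi_1\prec\pi$ with the same $\theta$-marginal for which $\pi$ is sufficient, repeats, and appeals to reverse-martingale convergence when the chain does not terminate. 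Your one-step construction is cleaner; the paper's is more agnostic about disintegration details.

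Your ``if'' direction, however, has a genuine gap, and it is precisely the obstacle you flag. You propose to decompose the hypothetical dominator $\pi'$ into its own first stage $\tau'$ (the part ``pinned down by the posterior over $\tilde\theta$'') plus a conditionally privacy-preserving remainder $q'$, and then assert that ``minimality of $\tau_\gamma$ forces the first part of $\pi'$ to be Blackwell dominated by $\tau_\gamma$.'' This is false: $\underline{\mathcal{T}}_\gamma$ typically contains several Blackwell-\emph{incomparable} elements (this is exactly why Theorem~\ref{prp:least-informative-signal} characterizes a set, and why equations~\eqref{eq:min_ext_1}--\eqref{eq:min_ext_4} admit many solutions). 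Your $\tau'$ will in general be one such element distinct from the fixed $\tau_\gamma$, with neither dominating the other; consequently $q'$ is conditionally privacy-preserving given $\tau'$, not given $\tau_\gamma$, and the comparison with $\tilde q_{\tau_\gamma}$ collapses. The ``factorization lemma'' you hope for cannot hold in the form you need.

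The paper avoids this entirely via an auxiliary lemma (Lemma~\ref{preserving}): for any $\pi$ with $\langle\pi\rangle^\theta\in\overline{\mathcal{P}}$, one has $\pi\in\overline{\Pi}_{\mathcal{P}}$ if and only if no signal $\tilde q'$ that is conditionally privacy-preserving \emph{given $\pi$} satisfies $\pi\prec\pi\lor\tilde q'$. The proof is short: any $\pi'\in\Pi_{\mathcal{P}}$ with $\pi'\succ\pi$ must have $\langle\pi'\rangle^\theta=\langle\pi\rangle^\theta$ (both lie on the frontier $\overline{\mathcal{P}}$), and taking $\pi'$ sufficient for $\pi$ makes $\pi'$ itself the required $\tilde q'$. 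Applied to $\pi=\tau_\gamma\lor\tilde q_{\tau_\gamma}$: if it were not on the frontier, Lemma~\ref{preserving} supplies $\tilde q'$ conditionally privacy-preserving given $\tau_\gamma\lor\tilde q_{\tau_\gamma}$ with $\tau_\gamma\lor\tilde q_{\tau_\gamma}\prec\tau_\gamma\lor\tilde q_{\tau_\gamma}\lor\tilde q'$; but then $\tilde q_{\tau_\gamma}\lor\tilde q'$ is conditionally privacy-preserving given $\tau_\gamma$ and strictly improves on $\tilde q_{\tau_\gamma}$, contradicting its undominatedness. The key move is that you never decompose $\pi'$ relative to its own $\theta$-stage; you use $\tau_\gamma\preceq\pi\preceq\pi'$ to see that the extra information sits above the \emph{fixed} $\tau_\gamma$ and is privacy-preserving there.
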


Briefly, Theorem \ref{thm:decomposition} shows that, once $\tau_{\gamma}$ is made public, the remaining task of constructing the Blackwell frontier of $\mathcal{P}$-privacy-constrained signals reduces to constructing the Blackwell frontier of privacy-preserving signals, as characterized in \citet{strack2024privacy}. 

The key insight of our approach is that, in order to construct the most informative signal whose distribution over posteriors about privacy is $\gamma$, we begin by identifying the least informative one. Once this baseline signal is obtained, every signal with the same $\gamma$ can be generated by disclosing additional conditionally privacy-preserving information. By contrast, if one starts with a signal that already contains unnecessary information beyond what is required to sustain $\gamma$, then taking its join with any conditionally privacy-preserving signal inevitably preserves this extra information, and thus cannot characterize the full set of feasible signals sustaining $\gamma$. Figure~\ref{fig:pcs} summarizes the structure of Blackwell-undominated privacy-constrained signals described in Theorem~\ref{thm:decomposition}.

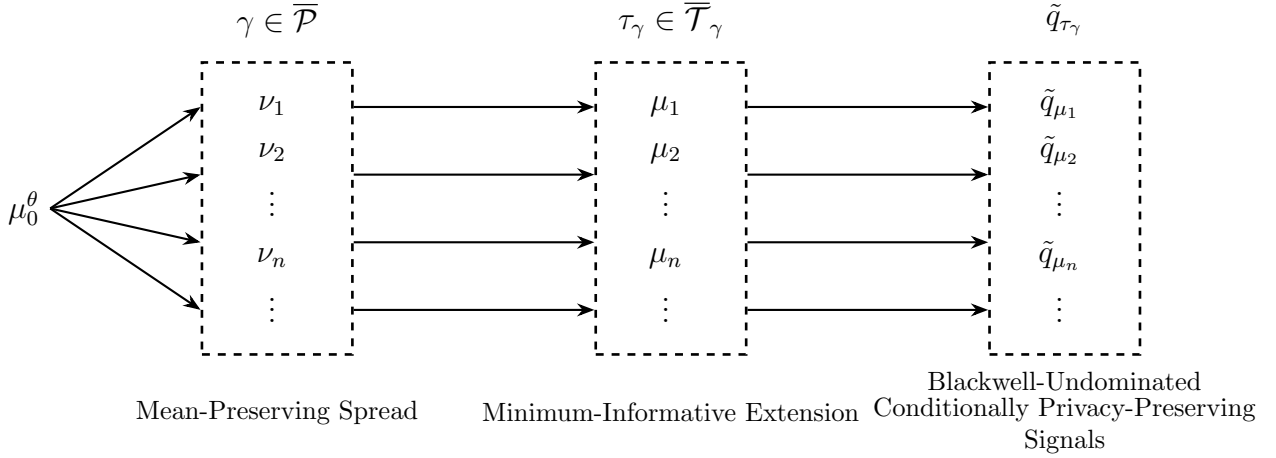
\begin{figure}[t]
\centering
\begin{tikzpicture}[
    >=Stealth,
    font=\normalsize,
    box/.style={
        draw, dashed,
        line width=1pt,      
        minimum width=2.0cm, 
        minimum height=3.6cm,
        inner sep=8pt,
        align=center
    },
    lab/.style={font=\footnotesize, align=center},
    arr/.style={->, thick}
]

\node (mu0) at (0,0) {$\mu_0^\theta$};

\node[box, right=2.0cm of mu0] (gamma) {
    \begin{tabular}{c}
        $\nu_1$\\[3pt]
        $\nu_2$\\[3pt]
        $\vdots$\\[3pt]
        $\nu_n$\\[3pt]
        $\vdots$
    \end{tabular}
};
\node at ($(gamma.north)+(0,0.55)$) {$\gamma\in\overline{\mathcal{P}}$};

\foreach \y in {1.35,0.45,-0.45,-1.35}{
  \draw[arr] (mu0.east) -- ($(gamma.west)+(0,\y)$);
}

\node[box, right=3.2cm of gamma] (tau) {
    \begin{tabular}{c}
        $\mu_1$\\[3pt]
        $\mu_2$\\[3pt]
        $\vdots$\\[3pt]
        $\mu_n$\\[3pt]
        $\vdots$
    \end{tabular}
};
\node at ($(tau.north)+(0,0.55)$) {$\tau_\gamma\in\overline{\mathcal{T}}_\gamma$};

\foreach \y in {1.35,0.45,-0.45,-1.35}{
  \draw[arr] ($(gamma.east)+(0,\y)$) -- ($(tau.west)+(0,\y)$);
}

\node[box, right=3.2cm of tau] (q) {
    \begin{tabular}{c}
        $\tilde q_{\mu_1}$\\[3pt]
        $\tilde q_{\mu_2}$\\[3pt]
        $\vdots$\\[3pt]
        $\tilde q_{\mu_n}$\\[3pt]
        $\vdots$
    \end{tabular}
};
\node at ($(q.north)+(0,0.55)$) {$\tilde q_{\tau_\gamma}$};

\foreach \y in {1.35,0.45,-0.45,-1.35}{
  \draw[arr] ($(tau.east)+(0,\y)$) -- ($(q.west)+(0,\y)$);
}

\node[lab] at ($(gamma.south)+(0,-0.75)$) {Mean-Preserving Spread};
\node[lab] at ($(tau.south)+(0,-0.75)$) {Minimum-Informative Extension};
\node[lab] at ($(q.south)+(0,-0.75)$) {\shortstack{Blackwell-Undominated \\ Conditionally Privacy-Preserving\\Signals}};

\end{tikzpicture}
\caption{Construction of Blackwell-Undominated Privacy-Constrained Signals}
\label{fig:pcs}
\end{figure}

Additionally, the characterization of the minimum-informative extension of a given $\gamma \in \overline{\mathcal{P}}$ is straightforward.

\begin{theorem}[Characterization of Minimum-Informative Extensions]
\label{prp:least-informative-signal}
    $\underline{\tau} \in \mathcal{T}$ is a minimum-informative extension of a $\gamma \in \overline{\mathcal{P}}$ if and only if 
    \begin{enumerate}[(1)]
        \item\label{least.1} $\underline{\tau}^\theta = \gamma$ almost surely;
        
        \item\label{least.2} For almost every $\mu, \hat{\mu} \in \operatorname{supp}(\underline{\tau})$, if $\mu^\theta = \hat{\mu}^\theta$ almost surely, then $\mu = \hat\mu$ almost surely.
    \end{enumerate}
\end{theorem}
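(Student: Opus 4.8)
The plan is to characterize the minimum-informative extensions by a direct two-directional argument, where the essential idea is that an extension $\underline\tau$ of $\gamma$ is minimum-informative precisely when it adds no information beyond what is forced by the requirement $\underline\tau^\theta=\gamma$. Condition~\eqref{least.1} is just the Extension requirement restated, so the content is the equivalence between the Minimum-informative requirement and condition~\eqref{least.2}. I would first set up the natural correspondence: given any $\tau\in\mathcal T$ with $\tau^\theta=\gamma$ a.s., the map $\mu\mapsto\mu^\theta$ pushes $\tau$ forward to $\gamma$, so $\tau$ can be disintegrated as a mean-preserving spread of the ``canonical'' candidate in which each $\nu\in\operatorname{supp}(\gamma)$ is lifted to a single posterior $\mu$ over $\Omega$ with $\mu^\theta=\nu$. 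Condition~\eqref{least.2} says exactly that this lift is a function of $\nu$ (no two distinct posteriors in the support share the same $\theta$-marginal), i.e. that the disintegration kernel is a.s. a Dirac measure.

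For the ``if'' direction, suppose $\underline\tau$ satisfies \eqref{least.1}–\eqref{least.2} and suppose for contradiction that some $\tau\in\mathcal T$ with $\tau^\theta=\gamma$ a.s. satisfies $\tau\prec\underline\tau$. Then $\underline\tau$ is a mean-preserving spread of $\tau$ via some dilation $K$; pushing forward through the marginal operator $(\cdot)^\theta$ and using $\tau^\theta=\underline\tau^\theta=\gamma$, I would show that the induced dilation on the $\theta$-marginals is mean-preserving with the same source and target distribution $\gamma$, hence — because a mean-preserving spread that does not change the distribution must be the identity kernel — the $\theta$-marginal of $\mu$ is a.s. preserved under $K$. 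Combined with \eqref{least.2}, which forces the posterior over $\Omega$ to be pinned down by its $\theta$-marginal on $\operatorname{supp}(\underline\tau)$, this makes $K$ itself the identity kernel, so $\tau=\underline\tau$, contradicting $\tau\prec\underline\tau$. For the ``only if'' direction, I would prove the contrapositive: if \eqref{least.2} fails, there is a positive-$\underline\tau$-measure set of pairs $\mu\neq\hat\mu$ with $\mu^\theta=\hat\mu^\theta$; I would then merge each such pair (replace the two posteriors by their appropriate convex combination, i.e. apply a nontrivial garbling that is the identity on the $\theta$-marginal) to obtain a strictly less informative $\tau\prec\underline\tau$ that still satisfies $\tau^\theta=\gamma$ a.s., so $\underline\tau$ is not minimum-informative.

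Concretely, the merging step is: fix a measurable selection of such colliding pairs, and define a Markov kernel $G:\Delta(\Omega)\to\Delta(\Delta(\Omega))$ that is the identity off the colliding set and, on a colliding pair $\{\mu,\hat\mu\}$ with common marginal $\nu$, sends each of $\mu,\hat\mu$ to the Dirac at the single weighted average $\bar\mu$ (weighted by the conditional $\underline\tau$-probabilities) — note $\bar\mu^\theta=\nu$ still. Then $\tau:=G\diamond\underline\tau$ (the garbling) satisfies $\underline\tau\succ\tau$ strictly and $\tau^\theta=\underline\tau^\theta=\gamma$. The main obstacle I anticipate is purely measure-theoretic: making the ``collision'' set, the pairing of colliding posteriors, and the averaging kernel jointly measurable in the abstract standard-Borel setting, and handling the ``almost surely'' qualifiers carefully (e.g. that the exceptional null sets propagate correctly through the disintegration). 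I expect to lean on the standard-Borel disintegration theorem (\cite{ccinlar2011probability}, Thm.~2.18) and a measurable-selection argument to carry this through; the convex-analytic core — that a mean-preserving spread fixing the marginal distribution is trivial, and that merging posteriors with equal $\theta$-marginal is a strict garbling that does not disturb $\gamma$ — is routine once the measurability is in place.
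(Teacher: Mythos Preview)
Your proposal is correct and follows essentially the same route as the paper's proof. Two minor streamlinings in the paper are worth noting: for the ``only if'' direction the paper collapses each entire fiber $\{\mu\in\operatorname{supp}(\underline\tau):\mu^\theta=\nu\}$ to its conditional mean under the disintegration of $\underline\tau$ (rather than selecting and merging colliding \emph{pairs}, which is what your measurable-selection step would in any case reduce to once you invoke disintegration); and for the ``if'' direction the paper bypasses your intermediate lemma (``a mean-preserving dilation that fixes the distribution is the identity'') by arguing the contrapositive---condition~\eqref{least.2} makes $\mu\mapsto\mu^\theta$ injective on $\operatorname{supp}(\underline\tau)$, so non-degeneracy of $K$ transfers directly to non-degeneracy of the induced $\theta$-dilation $Q$, yielding $\tau^\theta\prec\underline\tau^\theta=\gamma$, which already contradicts $\tau^\theta=\gamma$.
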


The key condition in Theorem~\ref{prp:least-informative-signal} is~\eqref{least.2}, which requires that, in the minimum-informative extension, each $\nu \in \operatorname{supp}(\gamma)$ extends to exactly one $\mu \in \Delta(\Omega)$. Otherwise, any two such $\mu$'s with the same marginal $\mu^{\theta}$ can be merged into a single posterior, implying that the original extension was not minimal. 

Theorem \ref{prp:least-informative-signal} is particularly useful in the discrete setting. Consider finite sets $\Omega = \{\omega_i\}_{i=1}^I$,  $\Theta = \{\theta_j\}_{j=1}^J$ with $J < I$ and let $\operatorname{supp}(\gamma) = \{\nu_n\}_{n=1}^N$. By Theorem \ref{prp:least-informative-signal}, characterizing the minimum-informative extensions of $\gamma$ amounts to allocating, for each $\nu \in \operatorname{supp}(\gamma)$, a conditional distribution over $\Omega$ given $\Theta$. For any $\mu \in \Delta(\Omega)$, denote by $\mu(\omega_i|\theta_j)$ the conditional probability of $\omega_i$ given $\theta_j$. Any sequence of posterior distributions $\{\mu_n\}_{n=1}^N$ constitutes a minimum-informative extension of $\gamma$ if and only if it satisfies the following conditions:
\begin{gather}
    \mu_n(\omega_i) = \sum_{j=1}^J\nu_n(\theta_j) \mu_n(\omega_i|\theta_j), \quad \text{for all } n, \label{eq:min_ext_1}\footnotemark
    \\
    \mu_0(\omega_i|\theta_j) = \sum_{n=1}^N \frac{\gamma(\nu_n)\nu_n(\theta_j) \mu_n(\omega_i|\theta_j)}{\sum_{n=1}^N \gamma(\nu_n) \nu_n(\theta_j)}, \quad \text{for all } i,j,\label{eq:min_ext_2}\\
    \sum_{i=1}^I \mu_n(\omega_i|\theta_j) = 1, \quad \text{for all } j,n, \label{eq:min_ext_3}\\
    \mu_n(\omega_i|\theta_j) \geq 0, \quad \text{for all } i,j,n.\label{eq:min_ext_4}
\end{gather}
\footnotetext{Since $\tilde{\theta}$ is a measurable function of $\Omega$, the realization of $\omega$ uniquely determines $\theta$. Hence $\mu_n(\omega_i)=\mu_n(\omega_i,\theta_j)$, and $\mu_n(\omega_i,\theta_j)>0$ only if $\theta_j=\tilde{\theta}(\omega_i)$. Thus, in Equation~\eqref{eq:min_ext_1}, at most one term in the summation can be positive.}
Given any collection $\{\mu_n\}_{n=1}^N$ satisfying these constraints, the associated minimum-informative extension is the distribution $\tau_\gamma$ over $\{\mu_1\}_{n=1}^N$ defined by $\tau_\gamma(\mu_n) = \gamma(\mu_n^\theta) = \gamma(\nu_n)$ for all $n$. Equation~\eqref{eq:min_ext_1} ensures that, for each $\nu_n$, there exists a unique posterior $\mu_n$ such that $\mu_n^{\theta}=\nu_n$. By the construction of $\tau_\gamma$, it follows immediately that $\tau_\gamma^{\theta}=\gamma$. Equation~\eqref{eq:min_ext_2} guarantees that the induced distribution $\tau_\gamma$ satisfies $\mathbb{E}_{\tau_\gamma}[\mu]=\mu_0$. Finally, Equations~\eqref{eq:min_ext_3} and \eqref{eq:min_ext_4} ensure that $\mu_n(\omega_i |\theta_j)$ is a well-defined probability distribution.

To illustrate how our characterization can be used to describe all privacy‐constrained signals, we present a simple example.

\begin{example}\label{ex:2x2}
    Consider the following $2 \times 2$ setting. The state space $\Omega = X \times \Theta$ consists of a variable of interest $X = \{x_1, x_2\}$ and a privacy  $\Theta = \{\theta_1, \theta_2\}$. Define $\overline{\gamma} \in \Gamma \setminus\{\delta_{\mu_0^\theta}\}$ by $\overline{\gamma} = \alpha \delta_{\nu_1} + (1-\alpha) \delta_{\nu_2}$. Now we construct all Blackwell-undominated $\overline{\gamma}$-privacy-constrained signals (Example~\ref{eg:single_bound}).
    
    First, by Theorem~\ref{prp:least-informative-signal}, any minimum-informative extension $\tau_\gamma$ of $\overline{\gamma}$ takes the form $\tau_\gamma = \alpha \delta_{\mu_1} + (1-\alpha)\delta_{\mu_2}$, where $\mu_1$ and $\mu_2$ satisfy
    \begin{gather*}
        \mu_1((x_i,\theta_j)) = \nu_1(\theta_j)\mu_1(x_i|\theta_j), \quad
        \mu_2((x_i,\theta_j))  = \nu_2(\theta_j) \mu_2(x_i|\theta_j), \quad \text{for all } i,j.  
    \end{gather*}
    with the pairs $\{\mu_n(x_i|\theta_j)\}_{i,j,n=1}^2$ that
    \begin{gather*}
        \mu_2(x_1|\theta_j) = \mu_0(x_1|\theta_j) + \frac{\alpha \nu_1(\theta_j)}{(1-\alpha)\nu_2(\theta_j)}[\mu_0(x_1|\theta_j) - \mu_1(x_1|\theta_j)], \quad \text{for all } j, \\ 
        \mu_n({x_1|\theta_j}) + \mu_n(x_2|\theta_j) = 1, \quad
        \mu_n(x_i|\theta_j) \geq 0, \quad \text{for all } i,j,n.
    \end{gather*}
    
    Second, applying Theorem~\ref{thm:decomposition}, we construct the Blackwell-undominated conditionally privacy-preserving signals associated with each $\tau_\gamma$. For each realization $\mu_n \in \tau_\gamma$, Theorem~1 in \cite{strack2024privacy} implies that there exists a quantile signal $\tilde{q}^*_{\mu_n}$ such that 
    $$\tilde{q}^*_{\mu_n}(x_1,\theta_j) \sim U\left[0, \mu_n(x_1 | \theta_j)\right], \quad
    \tilde{q}^*_{\mu_n}(x_2,\theta_j) \sim U\left[\mu_n(x_1 | \theta_j), 1\right],$$ which is Blackwell-undominated since revealing the realization of $\tilde{\theta}$ would allow the receiver to infer the state $\tilde{\omega}$ exactly. By reordering the quantile signal, we can generate all signals that are Blackwell-undominated and conditionally privacy-preserving given $\mu_n$. Specifically, any such reordered quantile signal, denoted by $\tilde{q}_{\mu_n}$, satisfies $$\tilde{q}_{\mu_n}(x_1,\theta_j) \sim U(I_{\mu_n(x_1\mid\theta_j)}),  \quad
    \tilde{q}_{\mu_n}(x_2,\theta_j) \sim U\left([0,1]\setminus I_{\mu_n(x_1\mid\theta_j)}\right),$$
    where $I_a \subseteq [0,1]$ is any measurable subset with Lebesgue measure $\lambda(I_a)=a$.

    Let $\tilde{q}_{\tau_\gamma}$ denote the collection of these reordered quantile signals across realizations of $\tau_\gamma$. The set of all $\tau_\gamma \lor \tilde{q}_{\tau_\gamma}$ then constitutes the Blackwell frontier of $\overline{\gamma}$–privacy-constrained signals.
\end{example}

Now consider a decision-making problem. For a decision maker who seeks to maximize an objective subject to the $\gamma$-privacy constraint, the resulting optimization problem can be solved in two steps. First, under each minimum-informative extension, one solves an optimal transport problem \citep{strack2024privacy}. Second, one optimizes over the set of minimum-informative extensions, which reduces to a linear programming problem. For a general $\mathcal{P}$-privacy constraint, however, an additional third step is required: one must also optimize over the Blackwell frontier of $\mathcal{P}$.

\section{Blackwell Frontier of Privacy-Permissible Set}\label{sec:privacy_permissible_set}

For a general privacy-permissible set $\mathcal{P}$, characterizing the $\mathcal{P}$-privacy-constrained signals first requires characterizing the Blackwell frontier of $\mathcal{P}$. In some cases, the Blackwell frontier is explicitly given, as in Example~\ref{eg:single_bound}. In other cases, such as differential privacy, the set $\mathcal{P}$ is defined by a collection of constraints, and its Blackwell frontier is not directly evident.  Unfortunately, there is no uniform method for characterizing the Blackwell frontier of an arbitrary or abstract $\mathcal{P}$. In this section, we focus on two important classes for which the Blackwell frontier can be analyzed: ex-post privacy and posterior-mean privacy.

\subsection{Ex-Post Privacy}

When the regulator cares about the \emph{ex post} cost of disclosing information about privacy, the constraint applies not to the distribution over posteriors about privacy, but directly to the posteriors themselves. Formally, let
$$\mathcal{M} \subseteq \Delta(\Theta)$$
denote the set of permissible posteriors about privacy. In line with the spirit of Blackwell-closeness in Assumption~\ref{ass:blackwell_closeness}, we assume that $\mathcal{M}$ is a compact convex subset of $\Delta(\Theta)$. If a signal $\pi$ is permissible, meaning that every posterior $\nu \in \operatorname{supp}(\langle \pi \rangle^{\theta})$ lies in $\mathcal{M}$, then any less informative signal $\pi' \preceq \pi$ is also permissible. It follows that every convex combination of posteriors in $\operatorname{supp}(\langle \pi \rangle^{\theta})$ must also lie in $\mathcal{M}$. To avoid triviality, we assume that $\mu_{0}^{\theta} \in \mathcal{M}$.

\begin{definition}
    A signal $\pi$ is a $\mathcal{M}$-ex-post-privacy-constrained signal if $\langle \pi \rangle^\theta (\mathcal{M}) = 1$.
\end{definition}

Given $\mathcal{M}$, the induced privacy-permissible set is
$\mathcal{P}_{\mathcal{M}}:=\{\gamma \in \mathcal{T} : \gamma(\mathcal{M}) = 1\}$. Since $\mu_{0}^{\theta} \in \mathcal{M}$, the set $\mathcal{P}_{\mathcal{M}}$ is nonempty. Because $\mathcal{M}$ is compact and convex, the set $\mathcal{P}_{\mathcal{M}}$ satisfies Assumptions~\ref{ass:blackwell_closeness} and~\ref{ass:closed_set}. Therefore, $\mathcal{M}$-ex-post privacy is a special case of $\mathcal{P}$-privacy. 

Let $\operatorname{ext} \mathcal{M}$ as the the set of extreme points of $\mathcal{M}$, i.e, $$\operatorname{ext} \mathcal{M}:=\{\nu \in \mathcal{M}: \nexists \nu' \neq \nu'' \in \mathcal{M}, \alpha \in (0,1) \text{ such that } \nu = \alpha \nu' + (1-\alpha) \nu'' \}.$$
Let $\overline{\mathcal{P}}_{\mathcal{M}}$ denote the Blackwell frontier of $\mathcal{P}_{\mathcal{M}}$ as defined previously.

\begin{proposition}[Characterization of Blackwell Frontier of Ex-Post Privacy-Permissible Set]\label{prp:extreme_points}
    A distribution of posteriors about privacy $\gamma \in \Gamma$ belongs to  $\overline{\mathcal{P}}_{\mathcal{M}}$ if and only if $\gamma(\operatorname{ext} \mathcal{M}) = 1$.
\end{proposition}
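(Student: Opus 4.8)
The plan is to prove both directions by relating the Blackwell order on $\mathcal{P}_{\mathcal{M}} = \{\gamma \in \mathcal{T} : \gamma(\mathcal{M}) = 1\}$ to the geometry of $\mathcal{M}$, using that a mean-preserving spread of $\gamma$ that stays inside $\mathcal{M}$ is exactly a dilation of $\gamma$ supported on $\mathcal{M}$. Throughout I will use Choquet-type / Krein–Milman facts about the compact convex set $\mathcal{M}$: every point of $\mathcal{M}$ is the barycenter of a probability measure on $\overline{\operatorname{ext}\mathcal{M}}$, and a point $\nu$ is extreme iff it admits no nontrivial dilation inside $\mathcal{M}$.

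For the ``if'' direction, suppose $\gamma(\operatorname{ext}\mathcal{M}) = 1$ but $\gamma$ is not Blackwell-undominated in $\mathcal{P}_{\mathcal{M}}$, so there is $\gamma' \in \mathcal{P}_{\mathcal{M}}$ with $\gamma' \succ \gamma$. By the dilation characterization of mean-preserving spreads (footnote~\ref{footnote:dilation}), there is a Markov kernel $K : \Delta(\Theta) \to \Delta(\Delta(\Theta))$ with $\nu = \int \nu'' K(d\nu'' \mid \nu)$ for $\gamma$-a.e.\ $\nu$, and $\gamma'(\cdot) = \int K(\cdot \mid \nu)\,\gamma(d\nu)$. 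Since $\gamma'(\mathcal{M}) = 1$, the kernel $K(\cdot \mid \nu)$ is (a.s.) supported on $\mathcal{M}$, so for $\gamma$-a.e.\ $\nu$ — in particular for a.e.\ extreme $\nu$ — we have $\nu$ written as the barycenter of a probability measure on $\mathcal{M}$; by extremality $K(\cdot\mid\nu) = \delta_\nu$ for $\gamma$-a.e.\ $\nu$, hence $\gamma' = \gamma$, contradicting $\gamma' \succ \gamma$. Thus $\gamma \in \overline{\mathcal{P}}_{\mathcal{M}}$. (A small care point: ``$\operatorname{ext}\mathcal{M}$'' vs.\ its closure, and measurability of $\operatorname{ext}\mathcal{M}$ — in a metrizable compact convex set $\operatorname{ext}\mathcal{M}$ is a $G_\delta$, so Borel, and one should phrase the extremality step as ``a.e.''.)

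For the ``only if'' direction, suppose $\gamma \in \overline{\mathcal{P}}_{\mathcal{M}}$ but $\gamma(\operatorname{ext}\mathcal{M}) < 1$; then there is a Borel set $A \subseteq \mathcal{M}\setminus\operatorname{ext}\mathcal{M}$ with $\gamma(A) > 0$. For each $\nu \in A$, $\nu$ is not extreme, so $\nu$ is the barycenter of a non-Dirac probability measure on $\mathcal{M}$; using a measurable selection (Kuratowski–Ryll-Nardzewski applied to the set-valued map $\nu \mapsto \{$measures on $\mathcal{M}$ with barycenter $\nu\}$, or directly Strassen's theorem) I would build a Markov kernel $K(\cdot \mid \nu)$ supported on $\mathcal{M}$ with barycenter $\nu$, equal to $\delta_\nu$ off $A$ and non-Dirac on $A$. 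Then $\gamma' := \int K(\cdot\mid\nu)\,\gamma(d\nu)$ lies in $\mathcal{P}_{\mathcal{M}}$, is a mean-preserving spread of $\gamma$, and is strictly more informative (not Blackwell-equivalent, since the dilation is nontrivial on a set of positive measure), contradicting $\gamma \in \overline{\mathcal{P}}_{\mathcal{M}}$. Hence $\gamma(\operatorname{ext}\mathcal{M}) = 1$. One must also note $\gamma \in \Gamma$ is automatic here since $\mathbb{E}_\gamma[\nu] = \mu_0^\theta$ is preserved and $\mathcal{M} \subseteq \Delta(\Theta)$.

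The main obstacle is the measurable-selection step in the ``only if'' direction: producing a Borel-measurable family of splitting measures $\nu \mapsto K(\cdot\mid\nu)$ with barycenter $\nu$ and uniformly non-degenerate on $A$. I expect to handle this via Strassen's theorem (which already underlies the mean-preserving-spread machinery cited in the paper) or a standard Kuratowski–Ryll-Nardzewski selection argument, together with the fact that on a metrizable compact convex set the barycenter map is continuous and the non-extreme points form a Borel set. The ``if'' direction is comparatively routine once the dilation/extreme-point dictionary is in place.
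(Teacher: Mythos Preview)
Your proposal is correct and follows essentially the same route as the paper: both directions hinge on the dilation characterization of mean-preserving spreads, with the ``if'' part using that an extreme point admits no nontrivial representing measure in $\mathcal{M}$, and the ``only if'' part invoking Choquet's theorem together with the Kuratowski--Ryll-Nardzewski measurable selection theorem (via continuity of the barycenter map on the metrizable compact convex $\mathcal{M}$) to build the dominating dilation. The only minor difference is that the paper selects, for every $\nu\in\mathcal{M}$, a representing measure supported on $\operatorname{ext}\mathcal{M}$, which makes non-degeneracy at non-extreme $\nu$ automatic and sidesteps the issue of measurably avoiding $\delta_\nu$ that your phrasing (``non-Dirac on $A$'') would otherwise require.
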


\begin{remark}
    When $\operatorname{ext}\mathcal{M}$ is finite and we restrict attention to distributions $\gamma$ with finite support, Proposition~\ref{prp:extreme_points} becomes immediate. The general case requires substantially more work. For the ``only if'' direction, one must construct a dilation (see footnote~\ref{footnote:dilation}) to show that any distribution $\gamma'$ with $\gamma'(\operatorname{ext}\mathcal{M})<1$ can be mean-preserving spread to a distribution $\gamma$ satisfying $\gamma(\operatorname{ext}\mathcal{M})=1$.
    This requires two ingredients. First, for each $\nu \in \mathcal{M}$ there must exist a probability distribution $P_\nu$ over $\operatorname{ext}\mathcal{M}$ such that $\mathbb{E}_{P_\nu}[\nu']=\nu$, which follows from Choquet's Theorem (Theorem 10.7, p.168, \cite{simon2011convexity}). Second, the mapping $\nu \mapsto P_\nu$ from $\mathcal{M}$ to $\Delta(\operatorname{ext}\mathcal{M})$ must be measurable. This measurability is ensured by Theorem~9.1 in \citet{simon2011convexity} (p.136), together with the Kuratowski–Ryll-Nardzewski measurable selection theorem (Theorem~6.9.3, p.36, Vol.II, \cite{bogachev2007measure}).
\end{remark}

By Proposition~\ref{prp:extreme_points}, under $\mathcal{M}$-ex-post privacy the Blackwell frontier of permissible distributions over posteriors about privacy, $\overline{\mathcal{P}}_{\mathcal{M}}$, can be generated by first computing the extreme points of the permissible posterior set $\mathcal{M}$. Any distribution in $\overline{\mathcal{P}}_{\mathcal{M}}$ is then obtained by convex combinations of these extreme points with respect to the prior $\mu_0^\theta$.

Ex-post privacy encompasses all privacy notions that impose constraints directly on posterior beliefs about privacy and that satisfy Blackwell-closeness. When privacy realizations are finite and the constraints imposed on posteriors are finite and linear, the set $\mathcal{M}$ becomes a convex polytope whose extreme points are finite and can be computed explicitly. For example, the well-known concept of differential privacy introduced by \citet{dwork2006calibrating} is defined by a finite collection of linear constraints on posterior beliefs about privacy. This concept has been adopted by major institutions such as Google, Microsoft, and the U.S. Census Bureau \citep{abowd2018us}.  The characterization of Blackwell frontier of differential privacy has been conducted by \cite{xu2025privacy}.

Similar with differential privacy, \cite{ghosh2016inferential} introduces the notion of inferential privacy in the context of vector-valued datasets. \cite{wang2025inferentially} adapts this concept to the framework of \cite{he2021private}, which corresponds to the discrete case of our setting. Our Definition \ref{def:ip} generalizes inferential privacy to arbitrary standard Borel spaces.


\begin{definition}\label{def:ip}
    For any $\varepsilon \in [0,+\infty)$. A signal $\pi$ is $\varepsilon$-inferential-privacy-constrained if for almost every $\nu \in \operatorname{supp}(\langle \pi \rangle^\theta)$ and $B', B'' \in \mathcal{B}(\Theta)$ such that $\mu_0(B') > 0$, $\mu_0(B'') > 0$,
    \begin{equation}\label{eq:ip}
        \frac{\nu(B')}{\nu(B'')} \leq 
        e^\varepsilon \cdot\frac{\nu(B')}{\nu (B'')}.
    \end{equation} 
\end{definition}

Let $\mathcal{I}$ denote the set of posteriors that satisfy the $\varepsilon$-inferential privacy constraint (\ref{eq:ip}), $\mathcal{P}_{\mathcal{I}}$ the privacy-permissible set induced by $\mathcal{I}$, and $\overline{\mathcal{P}}_{\mathcal{I}}$ its Blackwell frontier.

\begin{proposition}\label{prp:ip-froniter}
    A distribution of posteriors about privacy $\gamma \in \Gamma$ belongs to $\overline{\mathcal{P}}_\mathcal{I}$ if and only if
    for almost every $\nu \in \operatorname{supp}(\gamma)$, there is a subset $E_\nu \in \mathcal{B}(\Theta)$ such that $\mu_0^{\theta} (E_\nu) \in (0,1)$, and 
    \begin{equation}\label{prp_Dichotomy}
        \nu(B) = \frac{e^\varepsilon \mu_0^{\theta}(B \cap E_\nu) + \mu_0^\theta(B \setminus E_\nu)}{e^\varepsilon \mu_0^{\theta}(E_\nu) + (1-\mu_0^{\theta}(E_\nu))} ,
    \end{equation}
    for almost every $B \in \mathcal{B}(\Theta)$.
\end{proposition}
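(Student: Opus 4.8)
The plan is to recognize $\varepsilon$-inferential privacy as an instance of $\mathcal{M}$-ex-post privacy and then invoke Proposition~\ref{prp:extreme_points}, which reduces everything to identifying $\operatorname{ext}\mathcal{I}$. The opening step is bookkeeping: constraint~\eqref{eq:ip} is equivalent to $\nu\ll\mu_0^\theta$ together with $\operatorname{ess\,sup}(d\nu/d\mu_0^\theta)\le e^{\varepsilon}\operatorname{ess\,inf}(d\nu/d\mu_0^\theta)$, so $\mathcal{I}$ is convex (the ratio bound survives mixtures, since $\operatorname{ess\,sup}$ is subadditive and $\operatorname{ess\,inf}$ superadditive), weak$^*$-closed (densities of elements of $\mathcal{I}$ lie in the fixed band $[e^{-\varepsilon},e^{\varepsilon}]$, so a weak$^*$-limit has a density inheriting a two-sided bound via a weak-$L^2$ argument), and uniformly tight (since $\nu(B)\le e^{\varepsilon}\mu_0^\theta(B)$), hence compact; and $\mu_0^\theta\in\mathcal{I}$. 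Thus $\mathcal{P}_{\mathcal{I}}=\mathcal{P}_{\mathcal{M}}$ with $\mathcal{M}=\mathcal{I}$, so by Proposition~\ref{prp:extreme_points}, $\gamma\in\overline{\mathcal{P}}_{\mathcal{I}}$ iff $\gamma\in\Gamma$ and $\gamma(\operatorname{ext}\mathcal{I})=1$. The whole statement then follows once we show $\operatorname{ext}\mathcal{I}=\{\nu_E : E\in\mathcal{B}(\Theta),\ \mu_0^\theta(E)\in(0,1)\}$, where $\nu_E$ has density $(e^{\varepsilon}\mathbbm{1}_E+\mathbbm{1}_{E^{c}})/Z_E$ with $Z_E:=e^{\varepsilon}\mu_0^\theta(E)+(1-\mu_0^\theta(E))$ — this density is precisely the right-hand side of~\eqref{prp_Dichotomy} — and unpacking ``$\gamma(\operatorname{ext}\mathcal{I})=1$'' into the ``for $\gamma$-a.e.\ $\nu$'' phrasing. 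The case $\varepsilon=0$, where $\mathcal{I}=\{\mu_0^\theta\}$, is immediate once $\mu_0^\theta$ is non-degenerate, which I assume throughout (otherwise privacy is trivial).

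For the inclusion $\nu_E\in\operatorname{ext}\mathcal{I}$ when $\mu_0^\theta(E)\in(0,1)$: suppose $\nu_E=\tfrac12(\nu_1+\nu_2)$ with $\nu_k\in\mathcal{I}$ of density $g_k$ obeying $t_k\le g_k\le e^{\varepsilon}t_k$ a.e. On $E$ one has $2e^{\varepsilon}/Z_E=g_1+g_2\le e^{\varepsilon}(t_1+t_2)$, and on $E^{c}$ one has $2/Z_E=g_1+g_2\ge t_1+t_2$; since $\mu_0^\theta(E)$ and $\mu_0^\theta(E^{c})$ are both positive, these force $t_1+t_2=2/Z_E$. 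Then on $E^{c}$, $(g_1-t_1)+(g_2-t_2)=0$ with both summands $\ge0$ gives $g_k=t_k$ a.e.\ there, and the symmetric argument on $E$ gives $g_k=e^{\varepsilon}t_k$ a.e.; hence $g_k=t_k(\mathbbm{1}_{E^{c}}+e^{\varepsilon}\mathbbm{1}_E)$, and $\int g_k\,d\mu_0^\theta=1$ pins $t_k=1/Z_E$, so $\nu_1=\nu_2=\nu_E$.

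For the reverse inclusion, take $\nu\in\operatorname{ext}\mathcal{I}$ with density $g$, and set $t=\operatorname{ess\,inf}g>0$, $T=\operatorname{ess\,sup}g$. If $T<e^{\varepsilon}t$ (in particular if $g$ is constant), then for any nonzero bounded $\phi$ with $\int\phi\,d\mu_0^\theta=0$ — which exists because $\mu_0^\theta$ is non-degenerate — the densities $g\pm s\phi$ stay inside $[t,T]$, hence in $\mathcal{I}$, for all small $s>0$, contradicting extremality; so $T=e^{\varepsilon}t$. Now suppose $\mu_0^\theta(\{t<g<T\})>0$, so $A_\eta:=\{t+\eta\le g\le T-\eta\}$ has positive measure for some $\eta>0$. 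If $A_\eta$ splits into two disjoint positive-measure sets, perturb $g$ by a mean-zero $\phi$ supported there, again staying inside $[t,T]$ — a contradiction. The remaining, delicate case is that $\{t<g<T\}$ is, up to a null set, a single atom $\{a\}$ with $g(a)=m\in(t,T)$, so $g=t\,\mathbbm{1}_{E_-}+T\,\mathbbm{1}_{E_+}+m\,\mathbbm{1}_{\{a\}}$; one first checks $\mu_0^\theta(E_-),\mu_0^\theta(E_+)>0$ (else $m$ would equal $\operatorname{ess\,inf}g$ or $\operatorname{ess\,sup}g$), and then the densities $(t\pm\xi)(\mathbbm{1}_{E_-}+e^{\varepsilon}\mathbbm{1}_{E_+})+m_k\mathbbm{1}_{\{a\}}$ — with $m_k$ forced by $\int(\cdot)\,d\mu_0^\theta=1$ — both lie in $\mathcal{I}$ for small $\xi\ne0$ and average to $g$, a contradiction. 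Hence $g\in\{t,T\}$ a.e.; since $\varepsilon>0$ makes $t<T$ and both $\operatorname{ess\,inf}$ and $\operatorname{ess\,sup}$ are attained on positive-measure sets, $g$ is the density of $\nu_E$ with $E=\{g=T\}$ and $\mu_0^\theta(E)\in(0,1)$.

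The main obstacle is the reverse inclusion $\operatorname{ext}\mathcal{I}\subseteq\{\nu_E\}$ on a general (not necessarily finite) space: every non-extreme $\nu$ must be exhibited as a genuine midpoint, and the genuinely tricky configuration is when the ``interior part'' $\{t<g<T\}$ collapses to a single atom, where one cannot redistribute mass within that set and must instead move along the binding ratio constraint by simultaneously rescaling the two extreme pieces $E_-$ and $E_+$. A secondary, purely technical point is establishing weak$^*$-compactness of $\mathcal{I}$ via the tightness/uniform-integrability argument above, which is needed for Proposition~\ref{prp:extreme_points} to apply.
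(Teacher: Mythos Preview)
Your proof is correct and takes a genuinely different, more modular route than the paper. You explicitly recognize $\varepsilon$-inferential privacy as an instance of $\mathcal{M}$-ex-post privacy, verify that $\mathcal{I}$ is compact convex (so that Proposition~\ref{prp:extreme_points} applies), and then reduce everything to the purely convex-analytic problem of identifying $\operatorname{ext}\mathcal{I}$, which you solve by standard density perturbations $g\pm s\phi$ together with a careful treatment of the single-atom obstruction. The paper, by contrast, never invokes Proposition~\ref{prp:extreme_points}; instead it first proves a standalone lemma pinning down the Radon--Nikodym density $g_\nu$ and its essential range, and then for each direction constructs explicit binary splits (with concrete formulas for $\nu_1,\nu_2$ and the mixing weight) that either violate or respect the constraint. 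Conceptually both proofs are characterizing the same extreme points, but yours buys cleanliness and reusability (the argument would transfer verbatim to any ex-post constraint once its extreme points are known), while the paper's buys self-containment and an explicit dilation one could plug into downstream computations. Two minor points where your sketch is thin but salvageable: the weak$^*$-closedness of $\mathcal{I}$ deserves one more sentence (extract a weak-$L^2$ limit of the uniformly bounded densities $g_n$, test against indicators $\mathbbm{1}_B\in L^2(\mu_0^\theta)$ to get $\nu_{n_k}(B)\to\nu(B)$ for \emph{all} Borel $B$, then pass the ratio inequality to the limit along a further subsequence with $t_{n_k}\to t^*$); and in the single-atom case you should check explicitly that $m_\pm\in[t\pm\xi,\,e^\varepsilon(t\pm\xi)]$ for small $\xi$, which follows since $m$ lies in the open interval $(t,e^\varepsilon t)$ and both $m_\pm$ and the target interval vary continuously in $\xi$.
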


Proposition~\ref{prp:ip-froniter} provides a clean characterization of the extreme points under $\varepsilon$-inferential privacy, a setting with infinitely many privacy realizations and linear constraints. An extreme point of $\mathcal{I}$ partitions the privacy realization space into two sets of positive measure, $E$ and $\Theta \setminus E$. Relative to the prior, the posterior probability assigned to $E$ is increased uniformly, while the posterior probability assigned to $\Theta \setminus E$ is decreased uniformly. Using Proposition~\ref{prp:ip-froniter}, we can simplify the main characterizations of $\varepsilon$-inferential-private private information structure presented in \citet{wang2025inferentially}.


\subsection{Posterior-Mean Privacy}

In some settings, an individual's privacy loss depends only on the posterior expected type rather than on the full posterior distribution. In other words, the privacy cost associated with each posterior belief is a linear function of its realization. In this case, a natural privacy constraint is to impose an upper bound on the distribution of posterior means of a statistic of the privacy.

Specially, let $\tilde{f}:\Theta \to \mathbb{R}$ be a one-dimensional statistic of privacy. Denote by $\nu^{f}_{0}$ the prior distribution of $\tilde{f}$, and suppose that $\bar{\kappa}$ is a mean-preserving contraction of $\nu^{f}_{0}$. A signal $\pi$ is $f$-posterior-mean-privacy-constrained (with respect to $\bar{\kappa}$) if the distribution over posterior means of $\tilde{f}$ is a mean-preserving contraction of $\bar{\kappa}$. For any $\gamma \in \Gamma$, denote the induced distribution of posterior mean about $\tilde{f}$ by $$\kappa_\gamma(\cdot):=\gamma(\{\nu \in \operatorname{supp}(\gamma):\mathbb{E}_\gamma [\tilde{f}(\theta)] \in \cdot\}).$$ When $\Theta \subseteq \mathbb{R}$, the statistic $\tilde{f}$ captures all moments of $\tilde{\theta}$. Since the distribution of posterior means of $\tilde{f}$ depends on the  distribution of posteriors about privacy, posterior-mean privacy is not an ex-post privacy.

\begin{definition}
    A signal $\pi$ is a $f$-posterior-mean-privacy-constrained signal if $\kappa_{\langle \pi \rangle^\theta} \preceq \overline{\kappa}$.
\end{definition}

Let $\mathcal{E}$ denote the set of posteriors that satisfy the $f$-posterior-mean privacy, $\mathcal{P}_{\mathcal{E}}$ the privacy-permissible set induced by $\mathcal{E}$, and $\overline{\mathcal{P}}_{\mathcal{E}}$ its Blackwell frontier.

\begin{proposition}\label{prp:expectation}
    A distribution of posteriors about privacy $\gamma \in \Gamma$ belongs to $\overline{\mathcal{P}}_{\mathcal{E}}$ if and only if (1) 
    for almost every $\nu \in \operatorname{supp} (\gamma)$, there exists $y_1, y_2 \in \tilde{f}(\Theta)$ and $\alpha \in (0,1]$, $\nu$ puts $\alpha$ on a point $\theta_1 \in \tilde{f}^{-1}(y_1)$ and $(1-\alpha)$ on another point $\theta_2 \in \tilde{f}^{-1}(y_2)$ and (2) $\kappa_\gamma = \overline{\kappa}$.
\end{proposition}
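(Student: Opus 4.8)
\textbf{Proof proposal for Proposition~\ref{prp:expectation}.}

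The plan is to characterize $\overline{\mathcal{P}}_{\mathcal{E}}$ by combining two observations: a Blackwell-undominated element must "use up" the full budget $\overline{\kappa}$ in the posterior-mean dimension, and, conditional on that, each posterior $\nu$ must be as informative as possible about $\tilde\theta$ while revealing no more information about $\tilde f(\tilde\theta)$ than what its own posterior mean already encodes. First I would prove the ``only if'' direction. Suppose $\gamma \in \overline{\mathcal{P}}_{\mathcal{E}}$. For part (2), if $\kappa_\gamma \prec \overline{\kappa}$ strictly, then since $\overline{\kappa}$ is itself a mean-preserving contraction of $\nu_0^f$, there is a $\gamma'$ with $\kappa_{\gamma'} = \overline\kappa$ and $\gamma' \succeq \gamma$ with $\gamma' \neq \gamma$ up to Blackwell equivalence — one can spread $\gamma$ further in a way that only moves posterior means, contradicting undominatedness; I would make this precise by taking a mean-preserving spread of the measure $\kappa_\gamma$ toward $\overline\kappa$ and lifting it to a spread of $\gamma$ that keeps each $\nu$'s conditional-on-$f$ structure fixed. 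For part (1), fix a $\nu \in \operatorname{supp}(\gamma)$ with posterior mean $y := \mathbb{E}_\nu[\tilde f(\theta)]$. Among all posteriors $\nu'$ with $\mathbb{E}_{\nu'}[\tilde f] = y$, the Blackwell-maximal ones (holding the induced distribution over posterior means fixed) are exactly those supported on at most two points $\theta_1, \theta_2$ with $\tilde f(\theta_1) = y_1$, $\tilde f(\theta_2) = y_2$ and $\alpha y_1 + (1-\alpha) y_2 = y$: any $\nu'$ not of this form can be split, within a fixed fiber $\tilde f^{-1}(\{y_i\})$ first (revealing $\theta$ exactly on that fiber, which does not change posterior means at all) and then across fibers in a mean-preserving way, strictly increasing informativeness about $\tilde\theta$ while respecting the same $\kappa$. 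So if $\gamma$ violated (1) on a positive-$\gamma$ set of $\nu$'s, the pointwise improvement assembles (via a measurable selection, as in the proof of Proposition~\ref{prp:extreme_points}) into a $\gamma' \succ \gamma$ with $\kappa_{\gamma'} = \kappa_\gamma = \overline\kappa$, contradiction.

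For the ``if'' direction, suppose $\gamma$ satisfies (1) and (2); I must show no $\gamma' \in \mathcal{P}_{\mathcal{E}}$ strictly dominates it. Suppose $\gamma' \succeq \gamma$ with $\gamma' \in \mathcal{P}_{\mathcal{E}}$. Blackwell dominance of $\gamma'$ over $\gamma$ at the level of $\tilde\theta$-posteriors implies, after applying the pushforward under $\tilde f$, that $\kappa_{\gamma'} \succeq \kappa_\gamma = \overline\kappa$; but $\gamma' \in \mathcal{P}_{\mathcal{E}}$ forces $\kappa_{\gamma'} \preceq \overline\kappa$, so $\kappa_{\gamma'} = \overline\kappa = \kappa_\gamma$, i.e.\ the dilation taking $\gamma$ to $\gamma'$ is mean-preserving not only in $\Delta(\Theta)$ but also after projecting to $\tilde f$-means. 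Now the key point: the dilation decomposes $\gamma$-almost every $\nu$ (a two-point distribution on $\{\theta_1,\theta_2\}$ with $f$-values $y_1 \neq y_2$, or a Dirac if $\alpha = 1$) into a distribution of posteriors $\nu''$ averaging to $\nu$; since averaging the posterior means of the $\nu''$ must give back $\mathbb{E}_\nu[\tilde f] = y$ \emph{and} cannot spread the $\kappa$ any further, every $\nu''$ in the support of the split of $\nu$ must itself have posterior mean exactly $y$. A two-point distribution on distinct $f$-values that is a mean-preserving spread (in $\Delta(\Theta)$) of something, with all parts having the same $f$-mean — here I would argue the only such splits are the trivial one, because any nontrivial split of a two-atom measure on $\{\theta_1, \theta_2\}$ either moves mass off $\{\theta_1,\theta_2\}$ (impossible: a mean-preserving spread of $\delta$-mixtures keeps the support within the convex hull... actually in $\Delta(\Theta)$ the relevant constraint is that $\nu = \int \nu'' \, dK$, and the $\nu''$ live in $\Delta(\Theta)$) — more carefully, each $\nu''$ is a posterior refining $\nu$, so $\operatorname{supp}(\nu'') \subseteq \operatorname{supp}(\nu) = \{\theta_1,\theta_2\}$, hence $\nu''$ is again a two-point (or one-point) distribution on $\{\theta_1,\theta_2\}$, and its $f$-mean equals $y$ only if its weights equal $(\alpha, 1-\alpha)$, i.e.\ $\nu'' = \nu$. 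Therefore the dilation is $\gamma$-a.e.\ trivial, $\gamma' \sim \gamma$, and $\gamma \in \overline{\mathcal{P}}_{\mathcal{E}}$.

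The main obstacle I anticipate is the measure-theoretic bookkeeping in the ``only if'' direction: assembling the pointwise (per-$\nu$) improvement into a genuine mean-preserving spread of $\gamma$ requires a measurable selection of the improving split $\nu \mapsto K_\nu \in \Delta(\Delta(\Theta))$ and a verification that integrating these kernels against $\gamma$ preserves the barycenter $\mu_0^\theta$ and keeps $\kappa$ unchanged (so the result stays in $\mathcal{P}_{\mathcal{E}}$). I expect to handle this exactly as in the remark following Proposition~\ref{prp:extreme_points} — invoking the measurable-selection apparatus (Kuratowski–Ryll-Nardzewski, and the measurability of barycenter maps from \citet{simon2011convexity}) — and to reduce the ``split within a fiber, then across fibers'' construction to two successive dilations whose composition is again a dilation. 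A secondary subtlety is the degenerate case $\alpha = 1$ (where $\nu$ is a Dirac on a single $\theta$ with $\tilde f^{-1}(y_1)$ possibly a nontrivial fiber): here ``as informative as possible'' already forces $\nu = \delta_{\theta_1}$, and the statement's phrasing with $\alpha \in (0,1]$ accommodates this, so I would treat it as a boundary instance of the same argument rather than separately.
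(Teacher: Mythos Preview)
Your proposal is correct and takes essentially the same route as the paper's proof: split offending $\nu$'s within and across $f$-fibers for part~(1), lift a spread of $\kappa_\gamma$ toward $\overline\kappa$ for part~(2), and for the ``if'' direction use the two-point structure (via the support constraint $\operatorname{supp}(\nu'')\subseteq\{\theta_1,\theta_2\}$) to show that any nontrivial dilation of $\nu$ must move the $f$-mean, which is exactly the paper's argument that the induced kernel $Q$ on $f$-means is nondegenerate whenever $K$ is. One technical point to watch in your ``only if'' part~(2): the paper does not try to hit $\kappa_{\gamma'}=\overline\kappa$ exactly---it first reduces (``w.l.o.g.'') to the case where (1) already holds and then only constructs a \emph{strict} improvement $\gamma'\succ\gamma$ with $\kappa_{\gamma'}\preceq\overline\kappa$, because a two-point $\nu$ can reach $f$-means only in $[y_1,y_2]$ and the target dilation $K(\cdot\mid y)$ may have to be truncated to that interval; the weaker conclusion suffices for the contradiction and sidesteps the feasibility issue your lift would otherwise face.
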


Proposition~\ref{prp:expectation} characterizes the Blackwell frontier of the privacy-permissible set induced by posterior-mean privacy. When $\gamma \in \overline{\mathcal{P}}_\mathcal{\mathcal{E}}$, every realized posterior about privacy is a two-point distribution, and the induced distribution over posterior mean about $\tilde{f}$ attains the upper bound.

\section{Discussion and Future Work}\label{sec:discussion}
We provide a characterization of signals that, in the Blackwell sense, do not reveal more private information than those in a given privacy-permissible set $\mathcal{P}$, where $\mathcal{P}$ is a subset of distributions over posterior beliefs about privacy. Specifically, we show that the most informative $\mathcal{P}$-privacy-constrained signals can be constructed as the join of two components: (i) a minimum-informative extension of a distribution on the Blackwell frontier of $\mathcal{P}$, and (ii) a Blackwell-undominated, conditionally privacy-preserving signal as characterized by \citet{strack2024privacy}. We then characterize the Blackwell frontier of the privacy-permissible set, with particular attention to ex-post privacy and posterior-mean privacy.

This paper has several limitations. First, we do not provide a general method for identifying the Blackwell frontier of an arbitrary privacy-permissible set. Our current results apply only to two specific formulations of privacy. Extending these results to more general notions of privacy remains an important direction for future research.

Second, for the general privacy constraint, when there is a unique bound on privacy information, our approach reduces the decision-making problem to a first-stage optimal transport problem followed by a second-stage linear program. Although this method is conceptually clean, it is computationally demanding. Developing a more tractable solution approach remains an important direction for future research.

\bibliographystyle{chicagoa}
\bibliography{pcs}

\appendix

\section{Appendix}
\subsection{Proofs for Section~\ref{sec:pcs}}\label{appendix:thm}

\begin{lemma}\label{lemma:marginal_dominate}
    For two signals $\pi, \pi'$, if $\pi \succeq \pi'$, then $\pi \succeq_{\theta} \pi'$.
\end{lemma}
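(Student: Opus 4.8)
The plan is to exploit that the marginalization map $m:\Delta(\Omega)\to\Delta(\Theta)$, $m(\mu):=\mu^{\theta}$, is affine and continuous for the weak-$*$ topologies, that it induces the operator $(\cdot)^{\theta}$ on distributions over posteriors (so $\langle\pi\rangle^{\theta}$ is exactly the pushforward $m_{*}\langle\pi\rangle$, and likewise for $\pi'$), and that the relevant statement is therefore: \emph{the pushforward of a mean-preserving spread along an affine continuous map is again a mean-preserving spread}.

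The cleanest route uses the dual characterization of the dilation order. By Strassen's theorem \citep{strassen1965existence}, $\langle\pi\rangle\succeq\langle\pi'\rangle$ is equivalent to $\int_{\Delta(\Omega)} f\,d\langle\pi\rangle\ge\int_{\Delta(\Omega)} f\,d\langle\pi'\rangle$ for every bounded continuous convex $f:\Delta(\Omega)\to\mathbb{R}$. For any bounded continuous convex $g:\Delta(\Theta)\to\mathbb{R}$, the composition $g\circ m$ is bounded, continuous, and convex (convexity because $m$ is affine), so applying the inequality to $f=g\circ m$ and using the change-of-variables formula for pushforwards yields $\int g\,d\langle\pi\rangle^{\theta}\ge\int g\,d\langle\pi'\rangle^{\theta}$ for all such $g$, i.e. $\langle\pi\rangle^{\theta}\succeq\langle\pi'\rangle^{\theta}$, which is $\pi\succeq_{\theta}\pi'$. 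The only point requiring care here is that the dual (convex-function) characterization of the dilation order be available on the Polish—though not necessarily compact—space $\Delta(\Omega)$; this is fine because both $\langle\pi\rangle$ and $\langle\pi'\rangle$ have a well-defined barycenter equal to $\mu_{0}$, so the relevant Choquet/Cartier–Fell–Meyer machinery applies.

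As an alternative that argues directly from the definition in footnote~\ref{footnote:dilation}, I would construct the dilation explicitly. Let $K:\Delta(\Omega)\to\Delta(\Delta(\Omega))$ witness $\langle\pi\rangle\succeq\langle\pi'\rangle$. Since $\Delta(\Omega)$ and $\Delta(\Theta)$ are standard Borel (Section~\ref{sec:model}), disintegrate $\langle\pi'\rangle$ over the fibres of $m$: there is a Borel kernel $\nu'\mapsto\rho_{\nu'}\in\Delta(\Delta(\Omega))$ with $\rho_{\nu'}(m^{-1}(\nu'))=1$ and $\langle\pi'\rangle=\int\rho_{\nu'}\,\langle\pi'\rangle^{\theta}(d\nu')$. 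Set
\[
K^{\theta}(\cdot\mid\nu'):=\int_{\Delta(\Omega)}\bigl(m_{*}K(\cdot\mid\mu)\bigr)\,\rho_{\nu'}(d\mu).
\]
Then one verifies: (i) $K^{\theta}$ is a Markov kernel, its measurability in $\nu'$ inherited from that of $K$, $m$, and the disintegration; (ii) mean-preservation, $\int\nu\,K^{\theta}(d\nu\mid\nu')=\int\bigl(\int\mu\,K(d\mu\mid\mu')\bigr)^{\theta}\rho_{\nu'}(d\mu')=\int(\mu')^{\theta}\,\rho_{\nu'}(d\mu')=\nu'$, using linearity of $m$, mean-preservation of $K$, and $\operatorname{supp}\rho_{\nu'}\subseteq m^{-1}(\nu')$; and (iii) the spread property, $\int K^{\theta}(\cdot\mid\nu')\,\langle\pi'\rangle^{\theta}(d\nu')=m_{*}\bigl(\int K(\cdot\mid\mu)\,\langle\pi'\rangle(d\mu)\bigr)=m_{*}\langle\pi\rangle=\langle\pi\rangle^{\theta}$.

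The main obstacle is minor in either version: in the first approach it is simply the legitimacy of invoking the convex-function characterization of the dilation order on a non-compact Polish space; in the second it is the measurability of $K^{\theta}$ together with the existence of a regular disintegration of $\langle\pi'\rangle$ along $m$—both standard consequences of the standard-Borel hypotheses already in force. Everything else is routine bookkeeping with Fubini's theorem and the affineness of the marginalization map.
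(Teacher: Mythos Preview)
Your proposal is correct. Your second (``alternative'') route is essentially the paper's own proof: the paper takes the dilation $K$ witnessing $\langle\pi\rangle\succeq\langle\pi'\rangle$ and defines
\[
Q(B\mid\nu')=\mathbb{E}_{\langle\pi'\rangle}\bigl[K(\{\mu:\mu^{\theta}\in B\}\mid\mu_{s'})\,\big|\,\mu_{s'}^{\theta}=\nu'\bigr],
\]
which is exactly your $K^{\theta}(\cdot\mid\nu')=\int(m_{*}K(\cdot\mid\mu))\,\rho_{\nu'}(d\mu)$ once one recognizes that the conditional expectation given $\mu_{s'}^{\theta}=\nu'$ is integration against the disintegration kernel $\rho_{\nu'}$. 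The paper then checks mean-preservation and the spread identity just as you do in (ii) and (iii).

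Your first route, via the Strassen/convex-function characterization, is a genuinely different and shorter argument that the paper does not use. By passing to the dual formulation you avoid constructing the dilation on $\Delta(\Theta)$ altogether: the only substantive observation needed is that $g\circ m$ is convex whenever $g$ is, because $m$ is affine. This buys brevity and conceptual clarity, at the cost of invoking the equivalence between the dilation order and the convex order on a non-compact Polish space (which, as you note, is unproblematic here since both measures have barycenter $\mu_{0}$). The paper's explicit-kernel approach, by contrast, stays entirely within the dilation definition of footnote~\ref{footnote:dilation} and produces a concrete $Q$ that is reused later (e.g.\ in the proof of Theorem~\ref{prp:least-informative-signal}), so the extra work is not wasted.
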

\begin{proof}
    Since $\pi \succeq \pi'$, then there exists a dilation $K:\Delta(\Omega) \to \Delta(\Delta(\Omega))$, such that for almost every $\mu_{s'} \in \operatorname{supp}(\langle \pi'\rangle)$, $\mu_{s'} = \int_{\Delta(\Omega)} \mu_s dK(\mu_s|\mu_{s'})$, and $\langle \pi \rangle(\cdot) = \int_{\Delta(\Omega)} K(\cdot|\mu_{s'}) d\langle \pi'\rangle(\mu_{s'})$ (see footnote~\ref{footnote:dilation}). Define $Q:\Delta(\Theta) \to \Delta(\Delta(\Theta))$, such that for $\langle \pi'\rangle^\theta$-almost every $\nu' \in \Delta(\Theta)$ and all $B \in \mathcal{B}(\Delta(\Theta))$,
    \begin{equation*}
        Q(B|\nu') =  \mathbb{E}_{\langle \pi'
        \rangle}\left[K(\{\mu \in \Delta(\Omega): \mu^\theta \in B \} | \mu_{s'}) \mid \mu_{s'}^\theta = \nu'\right].
    \end{equation*}

    Then, for almost every $\nu' \in \operatorname{supp}(\langle \pi' \rangle ^ \theta)$,
    \begin{align*}
        \nu'  & = \mathbb{E}_{\langle \pi' \rangle} \left[ \mu_{s'}^\theta \mid \mu_{s'}^\theta = \nu' \right] \\
        & = \mathbb{E}_{\langle \pi' \rangle} \left[\int_{\Delta(\Omega)} \mu_s^\theta dK(\mu_s | \mu_{s'}) \mid \mu_{s'}^\theta = \nu' \right] 
        \\
        & = \mathbb{E}_{\langle \pi'
        \rangle}\left[ \int_{\Delta(\Theta)} \nu dK(\{\mu \in \Delta(\Omega): \mu^\theta = \nu \} | \mu_{s'}) \mid \mu_{s'}^\theta = \nu'\right]\\
        & = \int_{\Delta(\Theta)} \nu \mathbb{E}_{\langle \pi'
        \rangle}\left[ dK(\{\mu \in \Delta(\Omega): \mu^\theta = \nu \} | \mu_{s'}) \mid \mu_{s'}^\theta = \nu'\right] \\
        & = \int_{\Delta(\Theta)} \nu dQ(\nu|\nu').
    \end{align*}
    For almost every $B \in \mathcal{B}(\Delta(\Theta))$,
    \begin{align*}
        \langle \pi \rangle^{\theta}(B) &= \langle \pi \rangle(\{\mu \in \Delta(\Omega) : \mu^{\theta} \in B\}) \\
        & = \mathbb{E}_{\langle \pi' \rangle} \left[K(\{\mu \in \Delta(\Omega): \mu^{\theta} \in B\}|\mu_{s'})\right] \\
        & = \int_{\Delta(\Theta)} \mathbb{E}_{\langle \pi' \rangle}\left[K(\{\mu \in \Delta(\Omega): \mu^\theta \in B\} | \mu_{s'}) \mid \mu_{s'}^\theta = \nu'\right] d\langle \pi' \rangle^\theta (\nu') \\
        & = \int_{\Delta(\Theta)} Q(B|\nu') d\langle \pi' \rangle(\nu').
    \end{align*}
    Therefore, 
    $\pi \succeq_{\theta} \pi'$.
\end{proof}

For two signals $\pi$ and $\pi'$, we say that $\pi$ is \textit{sufficient} for $\pi'$ if $\pi \sim \pi \lor \pi'$. It is equal to say that given $\pi$, $\pi'$ is independent with $\tilde{\omega}$, i.e., $\mathbb{P}(\pi' \in \cdot|\omega, s) = \mathbb{P}(\pi' \in \cdot|s)$ almost surely.\footnote{By the definition, $\pi \sim \pi \lor \pi'$ says that $\mathbb{P}(\tilde{\omega} \in \cdot | s, s') = \mathbb{P}(\tilde{\omega} \in \cdot|s)$ almost surely. According to Theorem 7 in \cite{blackwell1951comparison}, it is equivalent to $\mathbb{P}(\pi' \in \cdot |\omega, s) = \mathbb{P}(\pi' \in \cdot |s)$ almost surely.} 
Furthermore, if $\pi \succeq \pi'$, there exists $\pi'' \sim \pi$ such that $\pi''$ is sufficient for $\pi'$ (\cite{blackwell1951comparison}, Theorem 6).\footnote{\label{footnote:sufficiency}The key distinction between Blackwell domination and sufficiency is that sufficiency requires the process from $\tilde{\mu}_{\pi'}$ to $\tilde{\mu}_{\pi}$ to form a martingale. Blackwell domination, on the other hand, does not impose this martingale requirement, allowing for more general transformations between signals. To understand it, consider the following example:

Under signal $\pi$, the posterior belief about $\tilde{\omega}$ takes the value $0$ (i.e, a distribution putting $1$ at $\omega=0$) with prob $1/4$, $1/2$ with prob $1/2$, and $1$ with prob $1/4$. Under signal $\pi'$, $\tilde{\omega} = 1/4$ with prob $1/2$ and $3/4$ with prob $1/2$. Sufficiency implies that, conditional on the belief $\tilde{\omega} = 1/4$ being realized after observing $\pi'$, the updated belief after observing $\pi$ should be $\tilde{\omega} = 0$ with prob $1/2$ and $\tilde{\omega} = 1/2$ with prob $1/2$. In contrast, Blackwell domination allows for different mappings. For example, conditional on $\tilde{\omega} = 1/4$ after observing $\pi'$, the belief after $\pi$ could be $\tilde{\omega} = 1/2$ with prob $1/2$ and $\tilde{\omega} = 1$ with prob $1/2$.

However, we can construct another signal, $\pi''$, by applying the dilation $K$ (as defined in footnote \ref{footnote:dilation}) to split $\pi'$. By construction, $\pi''$ is sufficient for $\pi'$, and $\pi'' \sim \pi$. }
If $\pi \succeq_{\theta} \pi'$, then there exists $\pi'' \sim_\theta \pi$ such that $\pi''$ is sufficient for $\pi'$ in terms of $\tilde{\theta}$, i.e., $\pi'' \sim_{\theta} \pi'' \lor \pi'$.\footnote{We can extend a Markov kernel $Q : \Delta(\Theta) \to \Delta(\Delta(\Theta))$ to a Markov kernel
$K : \Delta(\Omega) \to \Delta(\Delta(\Omega))$ by the following construction: for any
$\mu' \in \Delta(\Omega)$ with marginal $\mu'^{\theta} = \nu'$, define $K (\nu'' \otimes \mu'(\tilde{\omega}|\tilde{\theta}) |\mu') = Q(\nu''|\nu')$, where $\mu'(\tilde{\omega}|\tilde{\theta})$ is the conditionally distribution of $\tilde{\omega}$ given $\tilde{\theta}$ induced by $\mu'$. In other words, $K(\cdot | \mu')$ is defined as the pushforward of $Q(\cdot | \nu')$ under the map $\nu'' \mapsto \nu'' \otimes \mu'(\cdot | \tilde{\theta})$, which reconstructs a measure on $\Omega$ from its marginal $\nu''$ on $\Theta$ and the conditionals $\mu'(\cdot | \tilde{\theta})$.}

\begin{proof}[Proof of Lemma \ref{lemma:blackwell-frontier-P}]
    ``If''. Suppose $\pi \preceq \pi'$ where $\langle\pi'\rangle^\theta \in \overline{\mathcal{P}}$. According to Lemma~\ref{lemma:marginal_dominate}, $\langle\pi\rangle^\theta \preceq \langle \pi' \rangle^\theta$. Therefore by Assumption~\ref{ass:blackwell_closeness}, $\pi \in \Pi_\mathcal{P}$.
    
    ``Only if''. Suppose $\langle \pi \rangle^{\theta} \notin \overline{\mathcal{P}}$; otherwise the claim is trivial since $\pi \preceq \pi$. Then, there exists a signal $\pi'$ with $\langle \pi \rangle^{\theta} \in \overline{\mathcal{P}}$ such that $\pi'$ is sufficient for $\pi$ in terms of $\tilde{\theta}$. Then, since $\langle \pi \lor \pi'\rangle^{\theta} = \langle \pi' \rangle^{\theta} \in \mathcal{P}$, $\pi \lor \pi'$ is a $\mathcal{P}$-privacy-constrained signal. We get $\pi \preceq\pi \lor \pi'$.
\end{proof}

\begin{lemma}\label{preserving}
    For any signal $\pi$ with $\langle \pi \rangle^{\theta} \in \mathcal{P}$, $\pi \in \overline{\Pi}_{\mathcal{P}}$ if and only if there does not exist a conditionally (given $\pi$) privacy-preserving signal $\tilde{q}$ such that $\pi \prec \pi \lor \tilde{q}$.
\end{lemma}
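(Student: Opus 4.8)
The plan is to prove both directions by contraposition, relying on the martingale/sufficiency machinery developed just above (Theorem 6 in \cite{blackwell1951comparison}) together with Lemma \ref{lemma:marginal_dominate}. The ``only if'' direction is immediate: if there is a conditionally privacy-preserving $\tilde q$ with $\pi \prec \pi \lor \tilde q$, then since conditional privacy-preservation gives $\langle \pi \lor \tilde q\rangle^\theta = \langle \pi\rangle^\theta \in \mathcal P$, the signal $\pi \lor \tilde q$ is itself $\mathcal P$-privacy-constrained and strictly Blackwell dominates $\pi$, so $\pi \notin \overline\Pi_{\mathcal P}$.

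For the ``if'' direction I would argue the contrapositive: suppose $\pi \notin \overline\Pi_{\mathcal P}$, so there exists $\pi' \in \Pi_{\mathcal P}$ with $\pi' \succ \pi$; the goal is to produce a conditionally (given $\pi$) privacy-preserving $\tilde q$ with $\pi \lor \tilde q \succ \pi$ (equivalently $\succeq$ and not $\sim$). Since $\pi' \succ \pi$, by Theorem 6 of \cite{blackwell1951comparison} we may replace $\pi'$ by a Blackwell-equivalent signal, still called $\pi'$, that is \emph{sufficient} for $\pi$, i.e.\ $\pi' \sim \pi' \lor \pi$ and the process $\tilde\mu_\pi \to \tilde\mu_{\pi'}$ is a martingale. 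Now decompose the extra information in $\pi'$ beyond $\pi$: observing $\pi'$ is equivalent to first observing $\pi$ and then observing a residual signal $\tilde q$ (the conditional experiment of $\pi'$ given $\pi$), so that $\pi \lor \tilde q \sim \pi'$. The key point is that $\tilde q$ need not be conditionally privacy-preserving given $\pi$ — but because $\langle \pi'\rangle^\theta = \langle \pi \lor \tilde q\rangle^\theta \in \mathcal P$ and $\langle \pi\rangle^\theta \preceq \langle\pi'\rangle^\theta$ (Lemma \ref{lemma:marginal_dominate}), the \emph{only} way $\tilde q$ can carry extra information while keeping the marginal over $\Delta(\Theta)$-posteriors in $\mathcal P$ is constrained.

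Here is the main obstacle and how I would get around it. A priori, $\tilde q$ could both (a) reveal genuinely new information about $\tilde\theta$ (moving $\langle\pi\rangle^\theta$ strictly up toward $\langle\pi'\rangle^\theta$) and (b) reveal residual information about $\tilde\omega$ conditional on $\tilde\theta$. If $\langle\pi\rangle^\theta \sim_\theta \langle\pi'\rangle^\theta$ (no new privacy information), then $\tilde q$ is already conditionally privacy-preserving given $\pi$ by definition, and we are done. If instead $\langle\pi\rangle^\theta \prec_\theta \langle\pi'\rangle^\theta$ strictly, I would \emph{split} $\pi'$ into an intermediate signal: let $\hat\pi := \pi \lor \tilde q_1$ where $\tilde q_1$ is the component of $\tilde q$ capturing only the new privacy information (formally, garble $\tilde q$'s realization by conditioning on the induced $\Delta(\Theta)$-posterior), so that $\langle\hat\pi\rangle^\theta = \langle\pi'\rangle^\theta \in \mathcal P$ and hence $\hat\pi \in \Pi_{\mathcal P}$; and let $\tilde q_2$ be the residual so that $\hat\pi \lor \tilde q_2 \sim \pi'$, with $\tilde q_2$ conditionally privacy-preserving given $\hat\pi$. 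Then either $\hat\pi \succ \pi$, in which case — since $\hat\pi$ and $\pi$ differ only in privacy information and $\langle\hat\pi\rangle^\theta \in \mathcal P$ — we already have a strict improvement, but we still owe a conditionally-privacy-preserving \emph{extension}; or $\hat\pi \sim \pi$, which contradicts $\langle\pi\rangle^\theta \prec_\theta \langle\pi'\rangle^\theta$. The cleanest route, which I would actually write, is: set $\tilde q$ to be the full conditional experiment of $\pi'$ given $\pi$ but then observe that the lemma's statement only needs \emph{some} conditionally privacy-preserving $\tilde q$ with $\pi \lor \tilde q \succ \pi$; using sufficiency of $\pi'$ for $\pi$, the residual $\tilde q$ satisfies $\langle \pi \lor \tilde q\rangle^\theta = \langle\pi'\rangle^\theta$, and if $\langle\pi'\rangle^\theta \succ_\theta \langle\pi\rangle^\theta$ we are in the case already covered by Lemma \ref{lemma:blackwell-frontier-P} combined with the observation that a Blackwell-undominated element of $\overline\Pi_{\mathcal P}$ cannot have its $\theta$-marginal strictly below the frontier $\overline{\mathcal P}$ — so WLOG $\langle\pi\rangle^\theta \in \overline{\mathcal P}$, forcing $\langle\pi'\rangle^\theta \sim_\theta \langle\pi\rangle^\theta$, hence $\tilde q$ is conditionally privacy-preserving given $\pi$ and strictly improves $\pi$. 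I expect the delicate part to be this reduction to $\langle\pi\rangle^\theta \in \overline{\mathcal P}$ and the careful verification that the residual experiment $\tilde q$ extracted from a sufficient $\pi'$ is genuinely conditionally privacy-preserving (i.e.\ that $\mu^\theta_{(s,s')} = \mu^\theta_s$ a.s.), which is exactly where the martingale structure from Theorem 6 of \cite{blackwell1951comparison} is doing the work.
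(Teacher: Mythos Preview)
Your ``only if'' direction is correct and matches the paper exactly.

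For the ``if'' direction, there is a genuine gap. You argue the contrapositive: assume $\pi \notin \overline{\Pi}_{\mathcal P}$, pick $\pi' \in \Pi_{\mathcal P}$ with $\pi' \succ \pi$, and try to extract a conditionally privacy-preserving residual. You correctly identify the obstacle: $\pi'$ might carry strictly more $\tilde\theta$-information than $\pi$, i.e.\ $\langle\pi'\rangle^\theta \succ_\theta \langle\pi\rangle^\theta$, in which case the residual is \emph{not} conditionally privacy-preserving. Your attempted splitting into $\tilde q_1$ (privacy part) and $\tilde q_2$ (non-privacy part) does not close the argument --- you yourself note that in the case $\hat\pi \succ \pi$ you ``still owe a conditionally-privacy-preserving extension.'' Your fallback is to declare ``WLOG $\langle\pi\rangle^\theta \in \overline{\mathcal P}$,'' justified by the observation that elements of $\overline\Pi_{\mathcal P}$ have $\theta$-marginals on $\overline{\mathcal P}$. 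But you are in the contrapositive, assuming $\pi \notin \overline\Pi_{\mathcal P}$; that observation says nothing about where $\langle\pi\rangle^\theta$ sits, so the reduction is circular.

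The paper's proof avoids the entire detour. It uses the hypothesis of the lemma directly (the proof writes ``Since $\langle\pi\rangle^\theta \in \mathcal P$, $\langle\pi'\rangle^\theta = \langle\pi\rangle^\theta$,'' which only makes sense if the hypothesis is read as $\langle\pi\rangle^\theta \in \overline{\mathcal P}$; this is also how the lemma is actually invoked in the proof of Theorem~\ref{thm:decomposition}). With that reading, Lemma~\ref{lemma:marginal_dominate} gives $\langle\pi'\rangle^\theta \succeq \langle\pi\rangle^\theta$, and the frontier assumption immediately forces $\langle\pi'\rangle^\theta = \langle\pi\rangle^\theta$. Now replace $\pi'$ by a sufficient version so that $\pi \lor \pi' \sim \pi'$; then $\langle \pi \lor \pi'\rangle^\theta = \langle\pi'\rangle^\theta = \langle\pi\rangle^\theta$, so $\pi'$ \emph{itself} is the desired conditionally privacy-preserving signal with $\pi \prec \pi \lor \pi'$. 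No splitting, no WLOG. What you should take away is that the frontier hypothesis on $\langle\pi\rangle^\theta$ is not something to be derived mid-proof --- it is the input that makes the problematic case $\langle\pi'\rangle^\theta \succ_\theta \langle\pi\rangle^\theta$ impossible from the start.
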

\begin{proof}
    ``Only if''. Suppose there exists such signal $\tilde{q}$. Since $\tilde{q}$ is conditionally privacy-preserving given $\pi$, $\pi \lor \tilde{q}$ is $\mathcal{P}$-privacy-constrained. $\pi \prec \pi \lor \tilde{q}$ contradicts $\pi \in \overline{\Pi}_{\mathcal{P}}$.

    ``If''. Suppose there exists a $\mathcal{P}$-privacy-constrained signal $\pi'$ such that $\pi \prec \pi'$. Since $\langle \pi \rangle^{\theta} \in \mathcal{P}$, $\langle \pi' \rangle^{\theta} = \langle \pi \rangle^{\theta}$. W.l.o.g, assume $\pi'$ is  sufficient for $\pi$, i.e., $\pi \lor \pi' \sim \pi'$. Since $\langle \pi \lor \pi' \rangle^{\theta} = \langle \pi' \rangle^{\theta} = \langle \pi \rangle^{\theta}$, $\pi'$ is a conditionally (given $\pi$) privacy-preserving signal such that $\pi \prec \pi \lor \pi'$.
\end{proof}

\begin{proof}[Proof of Theorem \ref{thm:decomposition}]
    ``If''.
    Let $\tau_\gamma \in \underline{\mathcal{T}}_{\gamma}$ for some $\gamma \in \overline{\mathcal{P}}$
    and $\tilde{q}$ denote a signal which is Blackwell-undominated conditionally privacy-preserving given $\tau_\gamma$. Suppose $\tau_\gamma \lor \tilde{q}\notin \overline{\Pi}_\mathcal{P}$. Following Lemma \ref{preserving}, there exists another conditionally (given $\tau_\gamma \lor \tilde{q}$) privacy-preserving signal $\tilde{q}'$ such that $\tau_\gamma \lor \tilde{q} \prec \tau_\gamma \lor \tilde{q} \lor \tilde{q}'$. Since  $\langle \tau_\gamma \lor \tilde{q} \lor \tilde{q}' \rangle^{\theta} = \langle \tau_\gamma \lor \tilde{q}\rangle^\theta = \tau_\gamma^{\theta}$, $\tilde{q} \lor \tilde{q}'$ is also a conditionally (given $\tau_\gamma$) privacy-preserving signal. Therefore,  $\tau_\gamma 
    \lor \tilde{q} \prec \tau_\gamma \lor (\tilde{q} \lor \tilde{q}')$ contradicts that $\tilde{q}$ is  Blackwell-undominated among all conditionally privacy-preserving signals given $\tau_\gamma$.

    ``Only if''. 
    For any $\pi \in \overline{\Pi}_{\mathcal{P}}$, if $\pi \notin \underline{\mathcal{T}}_{\langle \pi \rangle^\theta}$, then there exists another signal $\pi_1 \in \Pi_{\mathcal{P}}$ such that $\langle \pi_1 \rangle^\theta = \langle \pi \rangle^\theta$ and $\pi$ is sufficient of $\pi_1$. If $\pi_1 \notin \underline{\mathcal{T}}_{\langle \pi \rangle^\theta}$, we can find another signal $\pi_2 \in \Pi_{\mathcal{P}}$ such that $\langle \pi_2 \rangle^\theta = \langle \pi \rangle^\theta$ and $\pi_1$ is sufficient of $\pi_2$. Continuing this process, if it terminates in a finite number of steps, we will eventually obtain a signal $\pi_N \in \underline{\mathcal{T}}_{\langle \pi \rangle^\theta}$. Otherwise, we construct a sequence $\{\tilde{\mu}_{\pi_t}\}_{t \in \mathbb{N}_+}$, where $\tilde{\mu}_{\pi_t}$ is random variable about posterior belief induced by $\pi_t$. According to Section 4 in \cite{blackwell1953equivalent}, the sequence $\{\tilde{\mu}_{\pi_t}\}_{t \in \mathbb{N}}$ forms a reverse martingale. 
    By the martingale convergence theorem \citep{doob1951continuous}, $\tilde{\mu}_{\pi_t} \to \tilde{\mu}^*$ as $t \to \infty$. Let $\pi_{\infty}$ be one of the signals that induces $\tilde{\mu}^*$. Then, we have $\pi_{\infty} \in \underline{\mathcal{T}}_{\langle \pi \rangle^\theta}$. Define $\pi^*$ as a unified notation that refers to either $\pi_N$ or $\pi_\infty$. Since $\pi$ is  conditionally privacy-preserving signal given $\pi^*$. Therefore, there is a $\langle\pi\rangle^\theta \in \mathcal{\overline{P}}$ (Lemma~\ref{lemma:blackwell-frontier-P}), $\pi^* \in \mathcal{T}$ and $\pi$ is a corresponding Blackwell-undominated conditional privacy-preserving signal such that $\pi \sim \pi^* \lor \pi$.
\end{proof}

\begin{proof}[Proof of Theorem \ref{prp:least-informative-signal}]
    ``Only if''. If condition~\eqref{least.1} is not satisfied, by definition $\underline{\tau} \notin \underline{\mathcal{T}}_{\gamma}$. Now, suppose condition~\eqref{least.2} is not satisfied. Then, there exists a positive measure set $B \in \operatorname{supp(\underline{\tau}^{\theta}})$ such that for any $\mu \in \operatorname{supp}(\underline{\tau})$ with $\mu^{\theta} \in B$, $Q_{\underline{\tau}}(\mu|\mu^\theta) < 1$, where $Q_{\underline{\tau}}$ denotes the conditional distribution over $\Delta(\Omega)$ given $\Delta(\Theta)$ induced by $\underline{\tau}$. By construction, $Q_{\underline{\tau}}$ is non-degenerate. We can then construct a new distribution over posteriors, $\tau'$, such that $\tau'^{\theta} = \underline{\tau}^{\theta}$ and for all $\mu^{\theta} \in \operatorname{supp}(\underline{\tau}^{\theta})$, in the new distribution $\tau'$, the conditional probability on $\hat{\mu} = \mathbb{E}_{\underline{\tau}}(\{\mu \in \Delta(\Omega):\mu^\theta =\nu\}|\nu)$ is $1$ given $\nu \in \operatorname{supp} (\underline{\tau}^\theta)$. $\underline{\tau}$ is a strict mean-preserving spread of $\tau'$, since $\underline{\tau}(\cdot) = \int_{\Delta (\Omega)} Q_{\underline{\tau}}(\cdot|\mu'^{\theta}) d \tau'(\mu')$, and $Q_{\underline{\tau}}(\cdot|\mu^{\theta})$ is non-degenerate. $\tau' \prec \underline{\tau}$ contradicts $\underline{\tau} \in \underline{\mathcal{T}}_{\gamma}$.

    ``If''. Suppose there exists a signal $\tau'$ such that $\tau'^{\theta} \in \overline{\mathcal{P}}$ and $\tau' \prec \underline{\tau}$. There is a non-degenerate dilatation $K:\Delta(\Omega) \to \Delta(\Delta(\Omega))$ spreads $\tau'$ to $\underline{\tau}$. Moreover, since $\underline{\tau}$ satisfies condition~\ref{least.2}, there is a one-to-one mapping from $\nu \in \operatorname{supp}(\underline{\tau}^\theta)$ to $\mu \in \operatorname{supp}(\underline{\tau})$. Therefore, the kernel $Q:\Delta(\Theta) \to \Delta(\Delta(\Theta))$ defined in Lemma~\ref{lemma:marginal_dominate} is non-degenerate if  $K$ is non-degenerate. As a result, $\tau'^{\theta} \prec \underline{\tau}^{\theta}$, which contradicts $\tau'^{\theta} \in \overline{\mathcal{P}}$.
\end{proof}

\subsection{Proofs for Section~\ref{sec:privacy_permissible_set}}

\begin{proof}[Proof of Proposition \ref{prp:extreme_points}]
    ``If''. Suppose there exists another $\gamma' \in \mathcal{P}_{\mathcal{M}}$ such that $\gamma \prec \gamma'$, then there exists a nondegenerate dilation $K: \Delta(\Theta) \to \Delta(\Delta(\Theta))$ such that for almost every $\nu \in \operatorname{supp}(\gamma)$, $$\nu = \int_{\Delta(\Theta)} \nu' K(d\nu' |\nu).$$
    This means that for almost every $\nu \in \operatorname{supp}(\gamma)$, it can be expressed by linear combination of $\nu' \in \operatorname{supp}(\gamma')$.
    Since $K$ is nondegenerate, there exists a positive measure subset $A \subseteq \operatorname{supp}(\gamma)$ such that $K(\nu|\nu) < 1$ for any $\nu \in A$. Hence,  $\nu \notin \operatorname{ext}\mathcal{M}$ for any $\nu \in A$, which is contradicted with $\gamma(\operatorname{ext}\mathcal{M}) = 1$.

    ``Only if''. We construct a dilation from $\mathcal{M}$ to $\Delta(\operatorname{ext} \mathcal{M})$. Since $(\Theta, \mathcal{B}(\Theta))$ is a standard Borel space, $\Delta(\Theta)$ embeds into a locally convex space and endowed with the topology of weak convergence is metrizable. $\mathcal{M}$ is a compact convex subset of $\Delta(\Theta)$, which  is also metrizable. By Choquet’s Theorem (Theorem 10.7, p.168, \cite{simon2011convexity}), for any $\nu \in \mathcal{M}$, the set
    $$\Phi(\nu) := \Bigl\{ P_\nu \in \Delta(\operatorname{ext}\mathcal{M}) : \nu = \int \nu'  dP_\nu(\nu') \Bigr\}$$
    is nonempty. Moreover, $\Phi(\mu)$ is closed in the weak-* topology. Define the barycenter map $B : \Delta(\mathcal{M}) \to \mathcal{M}$ by $B(P_\nu) = \int \nu'  dP_\nu = \nu$. By \cite{simon2011convexity}, Theorem 9.1 (p.136), the map $B$ is continuous. Consequently, for any open set $U \subseteq \Delta(\operatorname{ext}\mathcal{M})$, $\Phi^{-1}(U) = \{\nu \in \mathcal{M} : B^{-1}({\nu}) \cap U \neq \emptyset\} = B(U)$,
    which is an open set. Therefore, by the Kuratowski–Ryll-Nardzewski measurable selection theorem (Theorem 6.9.3, p.36, Vol. II, \cite{bogachev2007measure}), there exists a measurable selection $P_\nu^* : \mathcal{M} \to \Delta(\operatorname{ext}\mathcal{M})$ such that $\nu = \int \nu'  dP_\nu^*(\nu')$ Hence, the map $K : \nu \mapsto P_\nu^*$ defines a dilation. If $\gamma(\operatorname{ext}\mathcal{M}) \neq 1$, then $K$ is nondegenerate, which implies $\gamma \notin \overline{\mathcal{P}}_{\mathcal{M}}$.
\end{proof}

\begin{lemma}
    Suppose $\gamma \in \mathcal{P}_\mathcal{I}$. For almost every $\nu \in \operatorname{supp}(\gamma)$, there is a nonnegative measurable function $g_\nu$ defined for which
    \begin{equation}\label{Radon-dev}
        \nu(B) = \int_{B} g_\nu d\mu_0^\theta, \quad  \text{for all } B \in \mathcal{B}(\Theta).
    \end{equation}
    and $g_\nu$ is essential bounded. Let $e^{\overline{\varepsilon}_\nu}$ as the essential supremum of $g_\nu$, i.e,
    \begin{equation}\label{sup}
        \begin{aligned}
            \mu_0^\theta\left(\left\{\theta \in \Theta:g_\nu(\theta) > e^{\overline{\varepsilon}_\nu}\right\}\right) &= 0, \\
            \mu_0^\theta\left(\left\{\theta \in \Theta:g_\nu(\theta) > e^{\overline{\varepsilon}_\nu} - \delta \right\}\right) &> 0, \quad \text{for all } \delta > 0.
        \end{aligned}
    \end{equation}
    then, $\overline{\varepsilon}_\nu < \varepsilon$ and
    \begin{equation}\label{essentialbounded}
        \mu_0^\theta\left(\left\{\theta \in \Theta:g_\nu(\theta) \in[e^{\overline{\varepsilon}_\nu - \varepsilon}, e^{\overline{\varepsilon}_\nu}]\right\}\right) = 1.
    \end{equation}
    Moreover, if $g_\nu \neq 1$ $\mu_0^\theta$-almost surely, then $\overline{\varepsilon}_\nu > 0$ and
    \begin{equation}\label{split-half}
        \begin{aligned}
            \mu_0^\theta \left(\left\{\theta \in \Theta:g_\nu(\theta) \in[e^{\overline{\varepsilon}_\nu-\varepsilon}, 1)\right\}\right) > 0,\\
            \mu_0^\theta\left(\left\{\theta \in \Theta: g_\nu(\theta) \in (1, e^{\overline{\varepsilon}_\nu}]\right\}\right) > 0.
        \end{aligned}
    \end{equation}
\end{lemma}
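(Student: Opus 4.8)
The plan is to extract from the $\varepsilon$-inferential-privacy inequality \eqref{eq:ip} two-sided control of each permissible posterior $\nu$ by the prior $\mu_0^\theta$, and then simply read off the stated essential bounds.

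\textbf{Step 1 (two-sided density bounds).} Fix $\nu\in\operatorname{supp}(\gamma)$ satisfying \eqref{eq:ip}. Taking $B''=\Theta$ in \eqref{eq:ip} gives $\nu(B)\le e^\varepsilon\mu_0^\theta(B)$ for every $B$ with $\mu_0^\theta(B)>0$; taking $B'=\Theta$ gives $\nu(B)\ge e^{-\varepsilon}\mu_0^\theta(B)$ for every such $B$. Using Bayes-plausibility $\mathbb{E}_\gamma[\nu]=\mu_0^\theta$ (so that $\nu$ assigns zero mass to $\mu_0^\theta$-null sets for $\gamma$-almost every $\nu$) together with regularity of Borel measures on the standard Borel space $\Theta$, the upper bound extends to all $B\in\mathcal{B}(\Theta)$. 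Hence $\nu\ll\mu_0^\theta$, the Radon--Nikodym derivative $g_\nu$ in \eqref{Radon-dev} exists, and $e^{-\varepsilon}\le g_\nu\le e^\varepsilon$ $\mu_0^\theta$-almost surely; in particular $g_\nu$ is essentially bounded, so $e^{\overline\varepsilon_\nu}:=\operatorname{ess\,sup}g_\nu$ is well defined by \eqref{sup}, and since $\int g_\nu\,d\mu_0^\theta=\nu(\Theta)=1$ we also get $\overline\varepsilon_\nu\ge 0$.

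\textbf{Step 2 (the refined lower bound \eqref{essentialbounded}).} The inequality $g_\nu\le e^{\overline\varepsilon_\nu}$ is the definition of the essential supremum. For the lower half, suppose toward a contradiction that $\mu_0^\theta(\{g_\nu<e^{\overline\varepsilon_\nu-\varepsilon}\})>0$; then for some $c>0$ the set $B'':=\{g_\nu\le e^{\overline\varepsilon_\nu-\varepsilon}-c\}$ has positive prior measure, while by \eqref{sup} so does $B':=\{g_\nu>e^{\overline\varepsilon_\nu}-\delta\}$ for every $\delta>0$. Integrating $g_\nu$ over these sets gives $\nu(B')\ge(e^{\overline\varepsilon_\nu}-\delta)\mu_0^\theta(B')$ and $\nu(B'')\le(e^{\overline\varepsilon_\nu-\varepsilon}-c)\mu_0^\theta(B'')$, with $\nu(B'')\ge e^{-\varepsilon}\mu_0^\theta(B'')>0$ by Step 1 so the ratio in \eqref{eq:ip} is legitimate. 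Substituting into \eqref{eq:ip} and letting $\delta\downarrow 0$ yields $e^{\overline\varepsilon_\nu}/(e^{\overline\varepsilon_\nu-\varepsilon}-c)\le e^\varepsilon$, i.e. $c\le 0$, a contradiction; hence \eqref{essentialbounded} holds. This ``anchoring'' of $g_\nu$ between its essential supremum and $e^{-\varepsilon}$ times it is the crux of the lemma and the step I expect to be the main obstacle: the test events $B'$ (density near its essential supremum) and $B''$ (the putative low set) must be chosen so that \eqref{eq:ip} applies with both denominators strictly positive and the limit can be taken cleanly.

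\textbf{Step 3 (consequences for $\overline\varepsilon_\nu$ and the set split).} From $g_\nu\le e^\varepsilon$ we get $\overline\varepsilon_\nu\le\varepsilon$; if $\overline\varepsilon_\nu=\varepsilon$, then \eqref{essentialbounded} forces $g_\nu\ge 1$ a.s., which together with $\int g_\nu\,d\mu_0^\theta=1$ forces $g_\nu=1$ a.s.\ and hence $\overline\varepsilon_\nu=0$, contradicting $\overline\varepsilon_\nu=\varepsilon>0$; so $\overline\varepsilon_\nu<\varepsilon$. Now assume $g_\nu\ne 1$ on a positive-measure set. If $\overline\varepsilon_\nu=0$ then $g_\nu\le 1$ a.s., and again $\int g_\nu\,d\mu_0^\theta=1$ forces $g_\nu=1$ a.s., a contradiction; hence $\overline\varepsilon_\nu>0$. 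Applying \eqref{sup} with $\delta=e^{\overline\varepsilon_\nu}-1>0$ shows $\mu_0^\theta(\{g_\nu>1\})>0$, which is the second line of \eqref{split-half} since $g_\nu\le e^{\overline\varepsilon_\nu}$ a.s. Finally, $\int(g_\nu-1)\,d\mu_0^\theta=0$ and $\int_{\{g_\nu>1\}}(g_\nu-1)\,d\mu_0^\theta>0$ force $\int_{\{g_\nu<1\}}(1-g_\nu)\,d\mu_0^\theta>0$, so $\mu_0^\theta(\{g_\nu<1\})>0$; combined with the lower bound in \eqref{essentialbounded} this gives the first line of \eqref{split-half}. A secondary, more pedestrian difficulty is making the absolute continuity in Step 1 fully rigorous for an arbitrary prior $\mu_0^\theta$ — i.e. passing from the positive-prior-measure test sets in \eqref{eq:ip} to all Borel sets — which is precisely where Bayes-plausibility of $\gamma$ and regularity are invoked.
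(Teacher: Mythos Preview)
Your proof is correct and follows essentially the same approach as the paper's: both obtain the two-sided bound $e^{-\varepsilon}\le g_\nu\le e^\varepsilon$ from taking $B''=\Theta$ (resp.\ $B'=\Theta$) in \eqref{eq:ip}, and both derive the sharpened range \eqref{essentialbounded} by a contradiction argument that tests \eqref{eq:ip} on a set where $g_\nu$ is near its essential supremum against a set where it would be too small. The only cosmetic difference is that the paper introduces the essential infimum $e^{\underline\varepsilon_\nu}$ and proves $\overline\varepsilon_\nu-\underline\varepsilon_\nu\le\varepsilon$, whereas you phrase the same bound directly as $g_\nu\ge e^{\overline\varepsilon_\nu-\varepsilon}$; your deduction of $\overline\varepsilon_\nu<\varepsilon$ and of \eqref{split-half} from $\int g_\nu\,d\mu_0^\theta=1$ is likewise the same argument in slightly different order.
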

\begin{proof}
    Suppose $\gamma \in \mathcal{P}_\mathcal{I}$. Radon-Nikodym Theorem (\cite{royden2010real}, p.382) shows that for almost every $\nu \in \operatorname{supp}(\gamma)$, $g_\nu$ defined by (\ref{Radon-dev}) exists. By the inferential-privacy constraint (\ref{eq:ip}), for all $B \in \mathcal{B}(\Theta)$ with $\mu^\theta_0(B) > 0$,
    \begin{equation*}
        e^{-\varepsilon} \frac{\mu_0^\theta(B)}{\mu_0^\theta(\Theta)} \leq\frac{\nu(B)}{\nu(\Theta)} \leq e^{\varepsilon} \frac{\mu^\theta_0(B)}{\mu^\theta_0(\Theta)} \Rightarrow e^{-\varepsilon} \mu_0^\theta(B) \leq \nu(B) \leq  e^\varepsilon \mu_0^\theta(B),
    \end{equation*}
    which is due to the fact that $\mu_0^\theta(\Theta) = \nu(\Theta) = 1$. Therefore,
    \begin{equation*}
        \mu_0^\theta \left(\left\{\theta \in \Theta: g_\nu(\theta) \in[e^{-\varepsilon}, e^\varepsilon]\right\}\right) = 1.
    \end{equation*}
    $g_\mu$ is essentially bounded blow and above. Since the completeness of $\mathbb{R}$, $g_\nu$ has essential supremum and infimum, denoted as $e^{\overline{\varepsilon}_\nu}$ and $e^{\underline{\varepsilon}_\nu}$, respectively.

    Since the fact that $\nu(\Theta) = \nu_0^\theta(\Theta)$, if
    \begin{equation*}
        \begin{aligned}
            \mu_0^\theta\left(\left\{\theta \in \Theta:g_\nu(\theta) \in[e^{\underline{\varepsilon}_\nu}, 1)\right\}\right) > 0, \text{ then } \\
            \mu_0^\theta\left(\left\{\theta \in \Theta: g_\nu(\theta) \in (1, e^{\overline{\varepsilon}_\nu}]\right\}\right) > 0. \quad 
        \end{aligned}
    \end{equation*}
    Hence, if $g_\nu \neq 1$ $\mu_0^\theta$-almost surely, (\ref{split-half}) holds and $\underline{\varepsilon}_\nu < 0 <\overline{\varepsilon}_\nu$. We need only to show that $\overline{\varepsilon}_\nu < \varepsilon$ and $\overline{\varepsilon}_\nu - \underline{\varepsilon}_\nu \leq   \varepsilon$, which induce (\ref{essentialbounded}).

    Suppose $\overline{\varepsilon}_\nu - \underline{\varepsilon}_\nu > \varepsilon$, then for a constant number $0 < \Delta \varepsilon < \frac{1}{2} (\overline{\varepsilon}_\nu - \underline{\varepsilon}_\nu - \varepsilon)$, we have  $(\overline{\varepsilon}_\nu - \Delta\varepsilon) - (\underline{\varepsilon}_\nu + \Delta \varepsilon) > \varepsilon$. Since the definition of essential supremum and infimum, define two measurable sets
    \begin{equation*}
        B_1 := \left\{\theta \in \Theta: g_\nu(\theta) < e^{\underline{\varepsilon}_\nu + \Delta \varepsilon}\right\},
    \end{equation*}
    \begin{equation*}
        B_2 := \left\{\theta \in \Theta: g_\nu(\theta) > e^{\overline{\varepsilon}_\nu - \Delta \varepsilon}\right\},
    \end{equation*}
    $\mu^\theta_0(B_1) > 0$ and $\mu_0^\theta(B_2) > 0$. Obviously, 
    \begin{equation*}
        \frac{\nu(B_2)}{\nu(B_1)} > e^{\overline{\varepsilon}_\nu - \Delta\varepsilon - (\underline{\varepsilon}_\nu+ \Delta \varepsilon)} \frac{\mu_0^\theta(B_2)}{\mu_0^\theta(B_1)} > e^\varepsilon \frac{\mu_0^\theta(B_2)}{\mu_0^\theta(B_1)},
    \end{equation*}
    contradicts with (\ref{eq:ip}).

    If $g_\nu = 1$ $\mu_0^\theta$-almost surely, then $\overline{\varepsilon}_\nu = 0 < \varepsilon$. Otherwise, since above we show that $\underline{\varepsilon}_\nu < 0 < \overline{\varepsilon}_\nu$ and $\overline{\varepsilon}_\nu - \underline{\varepsilon}_\nu \leq \varepsilon$, then $\overline{\varepsilon}_\nu < \varepsilon$.
\end{proof}

\begin{proof}[Proof of Proposition~\ref{prp:ip-froniter}]
    ``If''. We will show that for almost every $\nu \in \operatorname{supp}(\gamma)$, there does not exist non-degenerated $K_\nu \in \Delta(\Delta(\Theta))$ such that 
    $\nu = \int_{\Delta(\Theta)} \nu' K_\nu(d\nu')$ and inferential-privacy constraint (\ref{eq:ip}) holds almost everywhere on $\operatorname{supp}(K_\nu)$. This statement indicates that $\gamma \in \overline{\mathcal{P}}_\mathcal{I}$.

    Suppose there is a non-degenerated $K_\nu$. Since $K_\nu$ is non-degenerated, there is a positive measurable subset $M \in \operatorname{supp}(K_\nu)$ and, w.l.o.g, a subset $B \subseteq E_\nu$ such that for all $\nu' \in M$, $\nu'(B) > \nu(B)$. Because of the inferential-privacy constraint (\ref{eq:ip}), $\frac{\nu'(B)}{\nu'(E_\nu^c)} \leq \frac{\nu(B)}{\nu(E_\nu^c)} = e^\varepsilon \frac{\mu_0^\theta(B)}{\mu_0^\theta(E_\nu^c)}$, then $\nu'(E_\nu^c) > \nu(E_\nu^c)$. Since the mean-preserving condition, there is another positive measurable subset $M' \in \operatorname{supp}(K_\mu)$ such that $\nu''(E_\nu^c) < \nu(E_\nu^c)$ holds for all $\nu'' \in M'$. Again, due to the constraint (\ref{eq:ip}), $\frac{\nu''(E_\nu)}{\nu''(E_\nu^c)} \leq \frac{\nu(E_\nu)}{\nu(E_\nu^c)} = e^\varepsilon\frac{\mu_0^\theta(E_\nu)}{\mu_0^\theta(E_\nu^c)}$, then $\nu''(E_\nu) < \nu(E_\nu)$. Therefore $\nu''(\Theta) = \nu''(E_\nu) + \nu''(E_\nu^c) < \nu(E_\nu) + \nu(E_\nu^c) = 1$ which contradicts with the fact that $\nu''$ is a probability.

    ``Only if''. Suppose a $\gamma'$ for which there is a positive measurable subset $M \in \operatorname{supp}(\gamma')$ and for each $\nu \in M$, there is a positive measurable subset $F \subseteq \Theta$ and some $\varepsilon' \in(0, \varepsilon)$ such that for any $B_1 \in \mathcal{B}(F)$ and $B_2 \in \mathcal{B}(\Theta)$ with $\mu_0^\theta(B_1) > 0$ and $\mu_0^\theta(B_2) > 0$,  
    \begin{equation}\label{bound}
        e^{-\varepsilon'}\frac{\mu_0^\theta(B_1)}{\mu_0^\theta(B_2)} \leq \frac{\nu(B_1)}{\nu(B_2)} \leq e^{\varepsilon'}\frac{\mu_0^\theta(B_1)}{\mu_0^\theta(B_2)}\quad  \mu_0^\theta\text{-almost surely}.
        \end{equation} 
    Then, for a positive constant $\delta < \min\{ e^{\varepsilon - \varepsilon'} - 1, 1/\nu(F)\}$, $\nu$ can split into $\nu_1$ with probability $\frac{1}{2}(1+\delta\mu(F))$ and $\nu_2$ with probability $\frac{1}{2}(1-\delta \nu(F))$, where $\nu_1(\cdot) :=  \frac{(1+\delta) \nu(\cdot\cap F)}{1+\delta \nu(F)} + \frac{\nu (\cdot \setminus F)}{1+\delta \nu(F)}$, $\nu_2(\cdot) := \frac{(1-\delta)\nu(\cdot \cap F)}{1-\delta \nu(F)} + \frac{\nu(\cdot \setminus F)}{1-\delta\nu(F)}$. Since for $B_3, B_4 \in \mathcal{B}(\Theta)$ with $\mu_0^\theta(B_4) >0$, almost surely, 
    \begin{equation*}
        \begin{aligned}
            \frac{\nu_1(B_3)}{\nu_1(B_4)} & = \frac{\nu_1(B_3 \cap F)}{\nu_1(B_4)} + \frac{\nu_1(B_3 \setminus F)}{\nu_1(B_4)} 
            \leq \frac{\nu(B_3 \cap F)(1+\delta)}{\nu(B_4)} + \frac{\nu(B_3\setminus F)}{\nu(B_4)} \\
            & \leq (1+\delta)e^{\varepsilon'} \frac{\nu_0^\theta(B_3\cap F)}{\nu_0^\theta(B_4)} + e^\varepsilon \frac{\nu_0^\theta(B_3 \setminus F)}{\nu_0^\theta (B_4)} 
            \leq e^\varepsilon \frac{\nu_0^\theta(B_3)}{\mu_0^\theta (B_4)},\\
            \frac{\nu_2(B_3)}{\nu_2(B_4)} & = \frac{\nu_2(B_3 \cap F)}{\nu_2(B_4)} + \frac{\nu_2(B_3 \setminus F)}{\nu_2(B_4)}
            \geq \frac{\nu(B_3 \cap F)(1- \delta)}{\nu(B_4)} + \frac{\nu(B_3 \setminus F)}{\nu(B_4)} \\
            & \geq (1-\delta)e^{-\varepsilon'} \frac{\mu_0^\theta(B_3\cap F)}{\mu_0^\theta (B_4)} + e^{-\varepsilon} \frac{\mu_0^\theta (B_3 \setminus F)}{\mu_0^\theta (B_4)}
            \geq e^{-\varepsilon} \frac{\mu_0^\theta(B_3)}{\mu_0^\theta (B_4)},
        \end{aligned}
    \end{equation*}
    $\nu_1$ and $\nu_2$ satisfies $\varepsilon$-inferentially private constraint (\ref{eq:ip}). Thus, $\gamma' \notin \overline{\mathcal{P}}_\mathcal{I}$.

    Next, we only need to show that if $\gamma'$ does not satisfies (\ref{prp_Dichotomy}), $\gamma'$ has a positive measurable $M \in \operatorname{supp}(\gamma')$ and for each $\mu \in M$, there is a subset $F \in \mathcal{B}(\Theta)$ with $\mu_0^\theta(F) > 0$ and some $\varepsilon' \in (0,\varepsilon)$ such that \eqref{bound} holds. 
        
    Let $M$ be the set of $\nu \in \operatorname{supp}(\gamma')$ such that (\ref{prp_Dichotomy}) does not hold and there is a measurable function $g_\nu$ defined by (\ref{Radon-dev}). Suppose $g_\nu \neq 1$ $\mu_0^\theta$-almost surely, otherwise $\nu = \mu_0^\theta$ almost surely which is a trivial case. Then, $\overline{\varepsilon}_\nu$ defined as (\ref{sup}) is contained in $(0,\varepsilon)$. Define two measurable sets 
    \begin{equation*}
        \begin{aligned}
            \overline{B} &:=\left\{\theta \in \Theta : g_\nu (\theta) = e^{\overline{\varepsilon}_\nu} \right\}, \\
            \underline{B} &:=\left\{\theta \in \Theta : g_\nu (\theta) = e^{\overline{\varepsilon}_\nu-\varepsilon} \right\}.
        \end{aligned}
    \end{equation*}
        
    \begin{enumerate}[(1)]
        \item Suppose $\mu_0^\theta(\overline{B}) = 0$ or $\mu_0^\theta(\underline{B}) = 0$, and w.l.o.g. assume $\mu_0^\theta(\overline{B}) = 0$, that is, 
        \begin{equation*}
            \mu_0^\theta\left(\left\{\theta \in \Theta:g_\nu(\theta) \in[e^{\overline{\varepsilon}_\nu- \varepsilon}, e^{\overline{\varepsilon}_\nu})\right\}\right) = 1.
        \end{equation*}
        Similarly as (\ref{split-half}), we can show that
        \begin{equation}\label{one-weak}
            \mu_0^\theta \left(\left\{\theta \in \Theta: g_\nu(\theta) \in [1, e^{\overline{\varepsilon}_\nu})\right\}\right) > 0.
        \end{equation}
        
        \item Suppose $\mu_0^\theta(\overline{B}) > 0$ and $\mu_0^\theta(\underline{B}) > 0$. If $\mu_0^\theta(\overline{B} \cup \underline{B}) = 1$, then $\nu$ satisfies (\ref{prp_Dichotomy}). Hence, under the assumption that $\nu$ does not satisfy (\ref{prp_Dichotomy}), 
        \begin{equation*}
            \mu_0^\theta \left(\left\{\theta \in \Theta: g_\nu(\theta) \in(e^{\overline{\varepsilon}_\nu- \varepsilon}, e^{\overline{\varepsilon}_\nu})\right\}\right) > 0.
            \end{equation*}
    \end{enumerate}
    W.l.o.g., we can assume that (\ref{one-weak}) holds. Since 
    \begin{equation*}
        \begin{aligned}
            0 & = \mu_0^\theta\left(\left\{\theta \in \Theta: g_\nu(\theta) \in [e^{\overline{\varepsilon}_\nu}, e^{\overline{\varepsilon}_\nu})\right\}\right)\\
            & = \mu_0^\theta \left( \bigcap_{n=1}^{\infty} \left\{\theta \in \Theta: g_\nu(\theta) \in (e^{\frac{n-1}{n}\overline{\varepsilon}_\nu}, e^{\overline{\varepsilon}_\nu})\right\}\right), 
        \end{aligned}
    \end{equation*}
    where the first equality is due to $\left\{\theta \in \Theta: g_\nu(\theta) \in [e^{\overline{\varepsilon}_\nu}, e^{\overline{\varepsilon}_\nu})\right\} = \emptyset$ and the second equality holds since for any $\theta \in \Theta$ for which $g_\nu(\theta) < e^{\overline{\varepsilon}_\nu}$, there is $N > 0$ such that $g_\nu(\theta) < \frac{N-1}{N} e^{\overline{\varepsilon}_\nu}$. Therefore, since the continuity of (finite) measure, there exists $N > 0$ such that $$\mu_0^\theta \left( \bigcap_{n=1}^{N} \left\{\theta \in \Theta: g_\nu(\theta) \in (e^{\frac{n-1}{n}\overline{\varepsilon}_\nu}, e^{\overline{\varepsilon}_\nu})\right\}\right) < \mu_0^\theta \left(\left\{\theta \in \Theta: g_\nu(\theta) \in [1, e^{\overline{\varepsilon}_\nu})\right\}\right),$$ 
    then
    \begin{equation}\label{closed-bound}
        \mu_0^\theta \left(\left\{\theta \in \Theta: g_\nu(\theta) \in [1, e^{\frac{N-1}{N}\overline{\varepsilon}_\nu}]\right\}\right) > 0.
    \end{equation}

    Denote $F:= \left\{\theta \in \Theta: g_\nu(\theta) \in [1, e^{\frac{N-1}{N}\overline{\varepsilon}_\nu}]\right\}$, then since (\ref{closed-bound}) and (\ref{essentialbounded}), for any $B_1 \in \mathcal{B}(F)$ and $B_2 \in \mathcal{B}(\Theta)$ with $\mu_0(B_1) > 0$ and $\mu_0(B_2) > 0$,
    \begin{equation*}
        e^{- \overline{\varepsilon}_\nu} \frac{\mu_0^\theta (B_1)}{\mu_0^\theta (B_2)} \leq \frac{\nu(B_1)}{\nu(B_2)} \leq e^{\varepsilon - \frac{1}{N}\overline{\varepsilon}_\nu} \frac{\mu_0^\theta (B_1)}{\mu_0^\theta (B_2)}.
        \end{equation*} 
    Since $\overline{\varepsilon}_\nu \in (0, \varepsilon)$, we construct a set $F$ satisfying (\ref{bound}).
\end{proof}

\begin{proof}[Proof of Proposition \ref{prp:expectation}]
    ``Only if''. We first show the part (1).
    \begin{enumerate}[(i)]
        \item Suppose there is a subset $Y \in \tilde{f}(\Theta)$ which contains more than two points such that $\nu$ puts a positive probability on $\tilde{f}^{-1}(y)$ for all $y \in Y$. Since $\mathbb{E}_\nu[\tilde{f}(\theta)]$ is a combination of $\tilde f(\theta)$ for $\theta \in \operatorname{supp}(\nu)$, there exists $\underline{y} < \mathbb{E}_\nu[\tilde{f}(\theta)] < \overline{y}$ such that $\underline{p} := \nu(\{\theta \in \Theta | \tilde{f}(\theta) \leq \underline{y}\}) > 0$, $\overline{p} := \nu(\{\theta \in \Theta | \tilde{f}(\theta) \geq \overline{y}\}) > 0$, $\underline{p}+\overline{p} < 1$ and there is $\alpha \in (0,1)$ such that $\alpha \mathbb{E}_\nu[\tilde{f}(\theta)| \tilde{f}(\theta) \leq \underline{y}] + (1-\alpha) \mathbb{E}_\nu[\tilde{f}(\theta) | \tilde{f}(\theta) \geq \overline{y}] = \mathbb{E}_\nu [\tilde{f}(\theta)].$ Denote $\lambda := \min\{\frac{\underline{p}}{\alpha}, \frac{\overline{p}}{1-\alpha}\} < 1$. Then $\nu$ can be split into $\nu_1(\theta):= \mathbf{1}\{\tilde f(\theta) \leq \underline{y}\}\alpha \frac{\nu(\theta)}{\underline{p}} + \mathbf{1}\{\tilde f(\theta) \geq \overline{y}\} (1-\alpha) \frac{\nu(\theta)}{\overline{p}}$ and $\nu_2 = \frac{1}{1-\lambda}(\nu-\lambda \nu_1)$ with probability $\lambda$ and $(1-\lambda)$ respectively. Since $\mathbb{E}_{\nu_1}[\tilde{f}(\theta)] = \mathbb{E}_\nu[\tilde{f}(\theta)] = \mathbb{E}_{\nu_2}[\tilde{f}(\theta)]$, this split does not change the distribution of posterior mean of $\tilde{f}$.
        \item Suppose there a positive measure subset of belief such that $\nu$ puts $\alpha$ on the set $\tilde{f}^{-1}(y_1)$ and $(1-\alpha)$ on the set $\tilde{f}^{-1}(y_2)$ but $\operatorname{supp}(\nu) \cap \tilde{f}^{-1}(y_1)$ is not a singleton. Denote $\nu_\theta(\theta') := \alpha \mathbf{1}\{\theta' = \theta\} + \mathbf{1}\{\theta' \in \tilde f^{-1}(y_2)\} \nu(\theta')$ for $\theta \in \operatorname{supp}(\nu) \cap \tilde{f}^{-1}(y_1)$, then $\nu = \int_{ \operatorname{supp}(\nu) \cap \tilde f^{-1}(y_1)} \nu_\theta d\nu(\theta)$. Since such $\mu$ has a split, and notice that $\mathbb{E}_{\nu_\theta} [\tilde f(\theta)] = \mathbb{E}_{\nu} [\tilde f(\theta)]$ for all $\theta \in \operatorname{supp}(\nu) \cap \tilde f^{-1}(y_1)$, $\gamma$ has a strictly mean-preserving spreading that cannot change the induced distribution of posterior mean of $\tilde{f}$.
    \end{enumerate}

     Suppose the part (2) does not hold, then $\kappa_\gamma$ is a strictly mean-preserving contraction of $\overline{\kappa}$, i.e., there exists a non-degenerated dilation $K: \mathbb{R} \to \Delta (\mathbb{R})$, that is, $y = \int_{\mathbb{R}} y' K(dy'|y)$ for all $y \in \operatorname{supp}(\kappa_\gamma),$ such that 
    \begin{equation*}
        \overline{\kappa}(B) = \int_{\mathbb{R}} K(B|y) d\kappa_\gamma, \quad \forall B\in \mathcal{B}(\mathbb{R}).
    \end{equation*}    
    That means, there is a positive measure subset $Y \subseteq \operatorname{supp}(\kappa_\gamma)$ such that any $y \in Y$ is split into $\operatorname{supp}(K(\cdot|y))$ according to non-degenerated distribution measure $K(\cdot|y)$. We now construct a dilation form $\Delta(\Theta)$ to $\Delta(\Delta(\Theta))$ based on $K$.

    W.l.o.g., suppose the part (1) holds. Denote $E$ as the set of $y \in \operatorname{supp}(\kappa_\gamma)$ such that almost surely, all $\nu \in \operatorname{supp}(\gamma)$ with $\mathbb{E}_\nu [\tilde{f} (\theta)] = y$ satisfies $\operatorname{supp}(\nu) \subseteq \tilde{f}^{-1}(y)$. $\operatorname{supp}(\kappa_\gamma) \setminus E$ has a positive measure. Otherwise,  by the definition of $E$, $\kappa_\gamma$ disclose full information about posterior mean of $\tilde{f}$. Hence $\kappa_\gamma = \overline{\kappa}$, contradiction. Assume $Y \subseteq \operatorname{supp}(\kappa_\gamma) \setminus E$. For $y \in Y$, it is comprised by some $\nu \in \operatorname{supp}(\gamma)$ such that its support consists of two points $\theta_1 \in \tilde{f}^{-1}(y_1)$ and $\theta_2 \in \tilde{f}^{-1}(y_2)$ , where $y_1 < y_2$. For each $y = \alpha y_1 + (1-\alpha) y_2$, with $\alpha \in [0,1]$, define $\nu_y := \alpha \delta_{\theta_1} + (1-\alpha) \delta_{\theta_2}$.
    If $\operatorname{supp}(K(\cdot|y)) \subseteq [y_1, y_2]$, then $\nu = \int_{\operatorname{supp}(K(\cdot|y))} \nu_{y'} K(y'|y)$. Thus, $\gamma$ has a strictly mean-preserving spread and satisfies the constraint. If $\operatorname{supp}(K(\cdot|x)) \nsubseteq [y_1, y_2]$, then there exits a non-degenerated $Q(\cdot|y)$ which is the mean-preserving contraction of $K(\cdot|y)$ and $\operatorname{supp}(Q(\cdot|y)) \subseteq [y_1, y_2]$. Following the same argument above, replacing by $Q(\cdot|y)$.

    ``If''. Suppose there exists another $\gamma' \in \mathcal{P}_\mathcal{E}$ that strictly Blackwell-dominates $\gamma$, then there is a non-degenerated dilation $K: \Delta (\Theta) \to \Delta (\Delta (\Theta))$, that is, $\nu = \int_{\Delta (\Theta)} \nu' K(d\nu'|\nu)$, for all $\nu \in \operatorname{supp}(\gamma)$, such that $\gamma'(B) = \int_{\Delta(\Theta)} K(B|\nu) d\gamma,$ for all $B \in \mathcal{B}(\Delta(\Theta))$. Define $Q(Y|y):= \mathbb{E}_\gamma[K(\{\nu':\mathbb{E}_{\nu'}[\tilde{f}(\theta)] \in Y\}|\nu_y) \mid \{\nu_y:\mathbb{E}_{\nu_y}[\tilde{f}(\theta)] = y\}]$, for almost every $Y \in \mathcal{B}(\mathbb{R})$. Because of the part (2) that support set of all $\nu \in \operatorname{supp}(\gamma)$ contains at most two points, for all $\nu_y$ such that $K(\nu_y|\nu_y) < 1$, it must have $K(\{\nu:\mathbb{E}_{\nu}[\tilde{f}(\theta)] \neq y\}|\nu_y) > 0$. Because $K$ is non-degenerated, $\{\nu: K(\nu|\nu) < 1\}$ has a positive measure. Hence, $Q$ is non-degenerated. 
    
    The following proof is similar with proof of Lemma~\ref{lemma:marginal_dominate}.
    \begin{equation*}
        \begin{aligned}
            \kappa_{\gamma'}(Y) = & \gamma'(\{\nu: \mathbb{E}_{\nu}[\tilde{f}(\theta)] \in Y\}) 
            =  \int_{\Delta(\Theta)} K(\{\nu': \mathbb{E}_{\nu'}[\tilde{f}(\theta)] \in Y\} | \nu) d\gamma(\nu) \\
            = & \int_{\mathbb{R}} \int_{\{\nu_y: \mathbb{E}_{\nu_y}[\tilde{f}(\theta)] = y\}} K(\{\nu': \mathbb{E}_{\nu'}[\tilde{f}(\theta)] \in Y\} | \nu_y) d\gamma(\nu_y) dy
            =  \int_{\mathbb{R}} Q(Y|y) d\kappa_\gamma(y),
        \end{aligned}
    \end{equation*}
    \begin{equation*}
        \begin{aligned}
            y d\kappa_\gamma(y) =& \int_{\{\nu_y: \mathbb{E}_{\nu_y}[\tilde{f}(\theta)] = x\}} y d\gamma(\nu_y) 
            = 
            \int_{\{\nu_y: \mathbb{E}_{\nu_y}[\tilde{f}(\theta)] = y\}} \left[\int_{\Theta} \tilde{f}(\theta) d\left(\int_{\Delta(\Theta)} \nu' K(d\nu' |\nu_y)\right) \right] d\gamma(\nu_y) \\
            = & \int_{\{\nu_y: \mathbb{E}_{\nu_y}[\tilde{f}(\theta)] = y\}} \int_{\Delta(\Theta)}\int_{\Theta} \tilde{f}(\theta) d\nu' K(d\nu'|\nu_y) d\gamma(\nu_y)\\
            = & \int_{\{\nu_y: \mathbb{E}_{\nu_y}[\tilde{f}(\theta)] = y\}} \int_{\Delta(\Theta)} \mathbb{E}_{\nu'} [\tilde{f}(\theta)] K(d\nu'|\nu_y) d\gamma(\nu_y)\\
            = & \int_{\{\nu_y: \mathbb{E}_{\nu_y}[\tilde{f}(\theta)] = y\}} \int_{\mathbb{R}} y' dK(\{\nu_{y'}: \mathbb{E}_{\nu_{y'}} [\tilde{f}(\theta)] = y'\}|\mu_y) d\gamma(\nu_y) \\
            = & \int_{\mathbb{R}} y' Q(dy'|y) d\kappa_\gamma(y)
            \Rightarrow y = \int_{\mathbb{R}} y' Q(dy'|y) \text{ almost surely}.
        \end{aligned}
    \end{equation*}
    Thus, $\kappa_{\gamma'}$ is a strictly mean-preserving spread of $\kappa_\gamma = \overline{\kappa}$, which contradict with $\gamma' \in \mathcal{P}_\mathcal{E}$.
\end{proof}

\end{document}